\def\arXiv#1{\href{http://arxiv.org/abs/#1}{arXiv:#1}}
\newcolumntype{P}[1]{>{\centering\arraybackslash}m{#1}}
\def\?[#1]{\textbf{[#1]}\marginpar{\Large{\textbf{??}}}}
\def\smallsection#1{\smallskip\noindent\textbf{#1}.}
\let\epsilon=\varepsilon 
\newcommand{\RR}{{\mathbb R}}
\newcommand{\CC}{{\mathbb C}}
\newcommand{\ZZ}{{\mathbb Z}}
\newtheorem{theo}{Theorem}
\newtheorem{prop}{Proposition}[section]
\newtheorem{lemm}[prop]{Lemma}
\numberwithin{equation}{section}
\DeclareMathOperator{\Res}{Res}
\DeclareMathOperator{\Spec}{Spec}
\DeclareMathOperator{\tr}{tr}
\newcommand\reallywidehat[1]{\arraycolsep=0pt\relax%
\begin{array}{c}
\stretchto{
  \scaleto{
    \scalerel*[\widthof{\ensuremath{#1}}]{\kern-.5pt\bigwedge\kern-.5pt}
    {\rule[-\textheight/2]{1ex}{\textheight}} 
  }{\textheight} %
}{0.5ex}\\           
#1\\                 
\rule{-1ex}{0ex}
\end{array}
}
\title{Integrability in the chiral model of magic angles} 
\author{Simon Becker}
\email{simon.becker@math.ethz.ch}
\address{ETH Zurich, 
Institute for Mathematical Research, 
Rämistrasse 101, 8092 Zurich, 
Switzerland}
\author{Tristan Humbert}
\email{tristan.humbert@ens.psl.eu}
\address{ENS Paris, Département de Mathématiques et Applications, 
Rue d'Ulm, Paris, 
France}
\author{Maciej Zworski}
\email{zworski@math.berkeley.edu}
\address{Department of Mathematics, University of California,
Berkeley, CA 94720, USA.}
\begin{document}
\begin{abstract}
Magic angles in the chiral model of twisted bilayer graphene are
parameters for which the chiral version of the Bistritzer--MacDonald
Hamiltonian exhibits a flat band at energy zero.
We compute the sums over powers of (complex) magic angles and use that to show
that the set of magic angles is infinite. We also provide
a new proof of the existence of the first real magic angle, showing also that
the corresponding flat band has minimal multiplicity for the simplest possible choice of potentials satisfying all symmetries. These results
indicate (though do not prove) a hidden
integrability of the chiral model.
\end{abstract}

\maketitle 

\section{Introduction and statement of results}

When two sheets of graphene are stacked on top of each other and twisted, it has been observed that at certain angles, coined the \emph{magic angles}, the composite system becomes superconducting. In this article, we study the chiral limit of the Bistritzer-MacDonald Hamiltonian \cite{BM11,CGG,Wa22} 
\[H(\alpha)=\begin{pmatrix} 0 & D (\alpha)^*\\ D(\alpha) & 0 \end{pmatrix} \text{ with } D(\alpha) = \begin{pmatrix} D_{\bar z}& \alpha U(z) \\ \alpha U(-z) & D_{\bar z} \end{pmatrix} \]
where the parameter $\alpha$ is proportional to the inverse relative twisting angle. After a simple rescaling, the potential is a smooth and periodic function satisfying
 \begin{equation}
 \label{eq:potential}
 U(z+a_\ell ) = \bar \omega U(z), \quad U(\omega z)= \omega U(z), \text{ and } U(\bar z)= \overline{U(z)},
 \end{equation}
 where $\omega = e^{2\pi i/3}$ and $a_\ell = \frac{4}{3} \pi i \omega^\ell.$ The simplest example of such a potential and our canonical choice of $U$ is 
 \begin{equation}
 \label{eq:defU} U_0(z) = \sum_{k=0}^2 \omega^k e^{\frac{1}{2}(z \bar \omega^k - \bar z \omega^k)}.
 \end{equation}
 
 Even though the potential $ U ( z ) $  is only periodic with respect to $\Gamma=4\pi i(\omega \ZZ \oplus\omega^2 \ZZ)$ the first property implies that the matrix potential, and thus $D(\alpha)$, commutes with the translation operator
\begin{equation}
\label{eq:4and11}   \mathscr L_{a } w ( z ) := 
\begin{pmatrix}  \omega^{a_1 + a_2}  & 0 \\
0 & 1  \end{pmatrix} 
w( z +  a ),  \ \ \ a \in \tfrac13 \Gamma  , \end{equation}
where $ w \in \mathbb C^2 $ and $ a = \tfrac 4 3 \pi i( \omega a_1 + \omega^2 a_2 ) , \ \ a_j \in \mathbb Z.$
We note that if $ \Gamma^*$  is the dual (reciprocal) lattice of $ \Gamma $, then 
 $ 3 \Gamma^* $ is the  dual lattice of $ \frac 13 \Gamma$. 

\begin{figure}
\includegraphics[width=10cm]{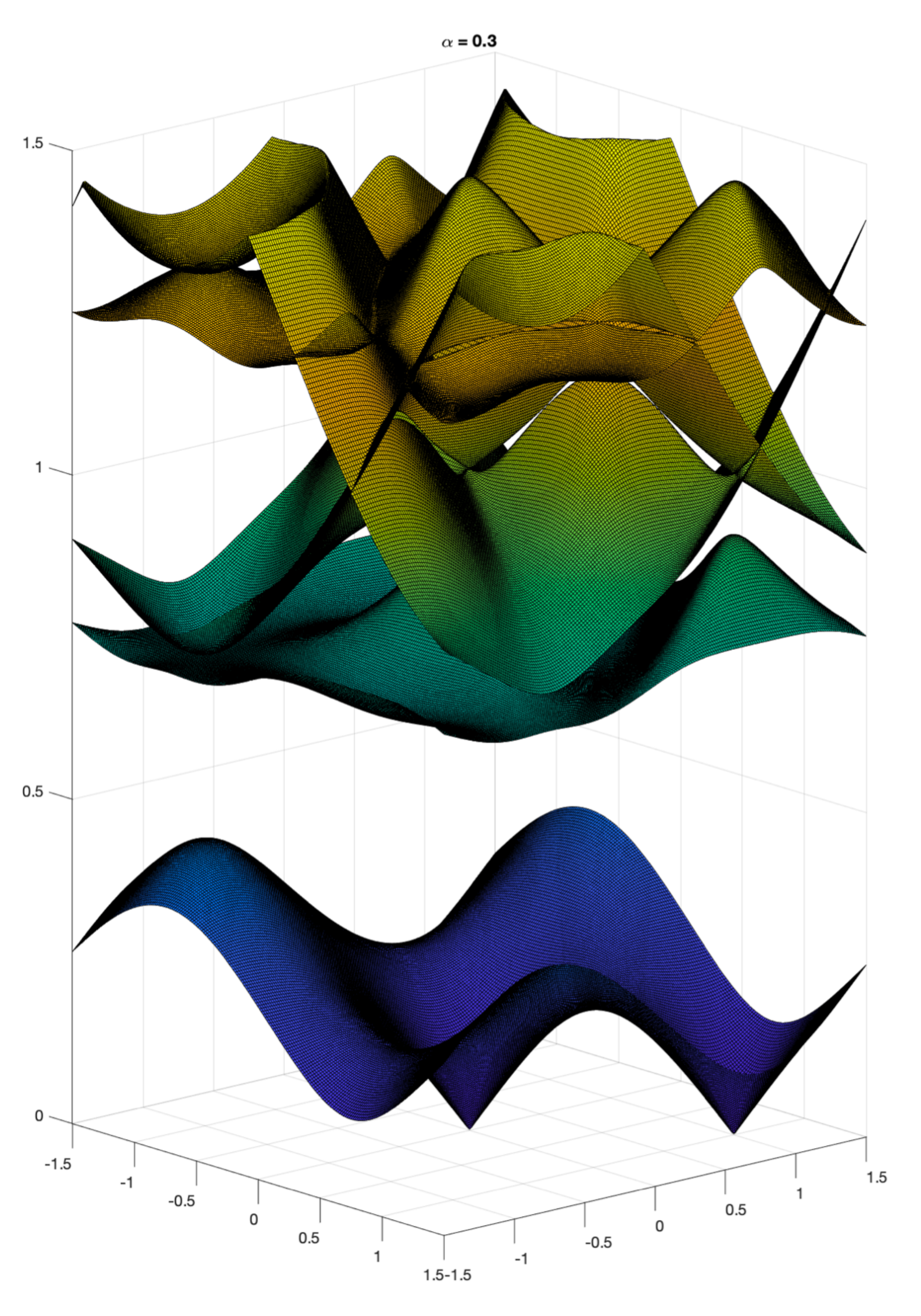}
\caption{\label{f:bands} 
Plots of the first 5 non-negative eigenvalues of
$ H_{\mathbf k } ( 0.3 ) $ acting on $ \mathscr H $ (see \eqref{eq:defHk} and \eqref{eq:defscH}), 
as function of $  k = (k_1 \omega^2 - k_2 \omega)/\sqrt 3 $ in 
in a fundamental cell of 
$ 3 \Gamma^* $,  parametrized by 
$ ( k_1, k_2 ) $
$ |k_j| < \frac32 $. See also \cite[Figure 4]{bhz2} for more information
and comparison with band structure of other models.}
\end{figure}

When moving to functions with values in $ \mathbb C^4 = \mathbb C^2 \times
\mathbb C^2 $ (on which $ H(\alpha ) $
acts) we extend the action of $ \mathscr L_{\mathbf a}$  to an action on each
$\mathbb C^2 $ component. We then consider the Floquet spectrum of
   \begin{equation}
   \label{eq:defHk}   H_{k } ( \alpha ) = \begin{pmatrix} 0 & D(\alpha)^*-\bar k\\ D(\alpha)-k &0 \end{pmatrix}\text{ with } k \in 3\Gamma^*, \end{equation}
   defined by $(H_k(\alpha) -E_j(\alpha,k) )w_j(\alpha,k)=0$, where eigenvalues of positive energy are labelled with $j \ge 1$ in ascending order, as a self-adjoint operator on 
\begin{equation}
\label{eq:defscH} 
\mathscr H := \{ \mathbf v \in L^2 ( \mathbb C/\Gamma ) :
\mathscr L_{\mathbf a } \mathbf v = \mathbf v , \ \ 
\mathbf a \in \tfrac13 \Gamma 
\} ,
\end{equation}
with the domain given by $ \mathscr H \cap H^1 ( \mathbb C/\Gamma   ) $ such that $$\Spec_{L^2 ( \mathbb C ; \mathbb C^4 ) }(H(\alpha))= \bigcup_{k \in \CC} \Spec_{\mathscr H} (H_{k}(\alpha)).$$
 
This Hamiltonian is an effective one-particle model which exhibits perfectly flat bands at magic angles. This appearance of perfectly flat bands in the chiral limit was explained by 
Tarnopolsky, Kruchkov and Vishwanath \cite{magic} with the help of Jacobi theta functions\footnote{As was pointed out to us by Alex Sobolev a similar argument appeared in the 
work of Dubrovin and Novikov \cite{dun} who studied magnetic Hamiltonians on tori.}. An equivalent spectral theoretic characterization of magic angles was then provided in \cite{beta}: if 
we define the following compact Birman-Schwinger operator
\begin{equation}
\label{eq:BS}
T_k = (2D_{\bar z}-k)^{-1} \begin{pmatrix} 0 & U(z) \\ U(-z) & 0 \end{pmatrix}. 
\end{equation}
then (see \cite[Theorem 2]{beta} we have the following equivalence
\cite[\S 2.3]{bhz2}) 
\begin{equation}
\label{eq:specc} 0 \in \bigcap_{k \in \CC} \Spec_{\mathscr H} (H_k(\alpha) ) \  \Longleftrightarrow \
\left\{ \begin{array}{l} 
\alpha^{-1} \in \Spec_{\mathscr H} (T_{k_0}) \\
 \text{ for some }k_0 \in \CC \setminus (3\Gamma^* - \{ 0, i \} ) , \end{array} \right. 
\end{equation} where $ \mathscr H $ is defined in \eqref{eq:defscH}. In other words, the spectrum of $T_{k_0}$ is independent of $k_0 \in \CC \setminus  (3\Gamma^* - \{ 0, i \} )$ and characterizes the values of $\alpha \in \CC$ at which the Hamiltonian exhibits a flat band at zero energy. Since the parameter $\alpha$ is inherently connected with the twisting angle, we shall refer to $\alpha$'s at which \eqref{eq:specc} occurs as \emph{magic} and
denote their set by $ \mathcal A \subset \mathbb C $.

\begin{figure}
\begin{center}
\includegraphics[width=12cm]{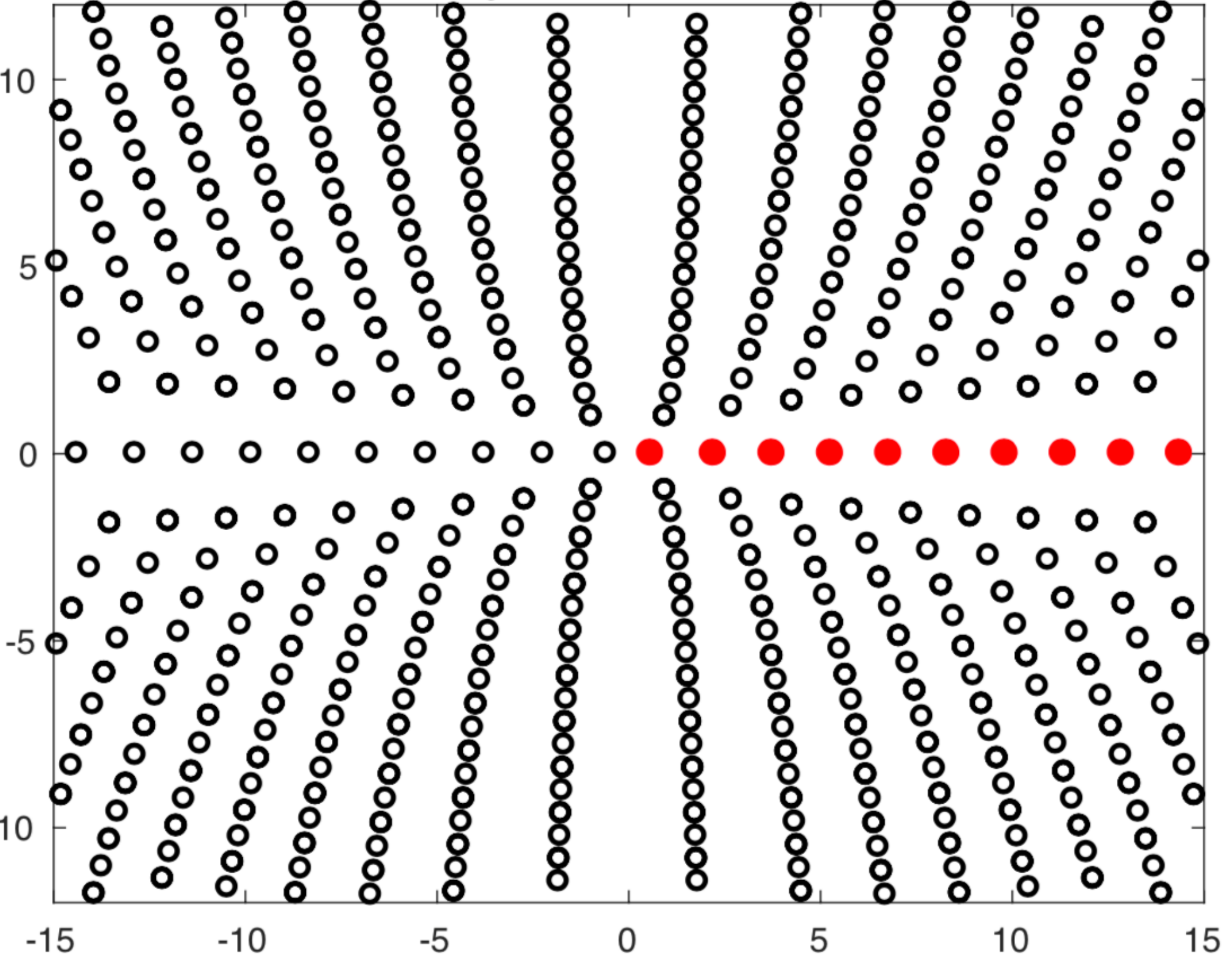}
\end{center}
\caption{\label{f:magic}
The set $ \mathcal A $ of magic $ \alpha$'s for which \eqref{eq:specc} holds, that is, 
the first band is flat. The positive elements of $ \mathcal A $
are the reciprocals
of the ``physically relevant" positive angles. 
Potential \eqref{eq:defU} is responsible for the regularity of the set
which seems to indicate hidden integrability. For more general potentials the
distribution is more complicated -- see \url{https://math.berkeley.edu/~zworski/multi.mp4}
for $ U_\theta ( z ) = (\cos^2 \theta) U ( z ) + 
( \sin^2\theta) \sum_{k=0}^2 \omega^k e^{\bar z \omega^k-  z \bar \omega^k} $
which satisfies  the required symmetries
 \eqref{eq:potential}. The animation also indicates changing multiplicities.}
\end{figure}

The analysis of magic angles is therefore reduced to a spectral theory problem involving a single compact non self-adjoint operator. Since even non-trivial non self-adjoint compact operators do not necessarily have non-zero eigenvalues, the existence of a parameter $\alpha$ at which the Hamiltonian exhibits a flat band at zero energy is non-trivial. In \cite{beta} the existence of such a complex parameter $\alpha \in \CC \setminus \{0\}$ was first concluded by showing that $\tr_{\mathscr H} (T_k^4)  = 8{\pi}/{\sqrt{3}}$ which implied existence of a non-zero eigenvalue\footnote{In \cite{beta} we 
considered the trace on $ L^2 ( \mathbb C/\Gamma; \mathbb C^2 ) $ which gave this 
answer multiplied by $ 9 $.}.
      This result was improved by a computer-assisted proof \cite{lawa} in which Watson and Luskin
       used the complex-analytic characterization of magic angles from \cite{magic} to prove existence of the  first \emph{real} magic angle and obtained explicit bounds on its position.

In this article, we exhibit a general form of traces of powers of $T_k$. 
This suggests a hidden {\em integrability} of the Hamiltonian $ H ( \alpha ) $ 
for potentials satisfying  \eqref{eq:potential}, as all traces exhibits special arithmetic properties. With our current techniques, we do not have explicit control on the full set of traces which would imply a complete understanding of all magic angles.
These are already visible in the regular but evasive structure of the set of of magic
$ \alpha$, $ \mathcal A \subset \mathbb C $ -- see Figure \ref{f:magic}.

\begin{theo}
\label{theo:traces}
For $\ell \ge 2$ and $U=U_0$ with $U_0$ as in \eqref{eq:defU}
\begin{equation}
\label{eq:traces} \tr(T_k^{2\ell})=\sum_{\alpha \in \mathcal A  } \alpha^{-2\ell}=   \frac{\pi}{\sqrt{3}} \, q_{\ell } \ 
\text{ with } q_{\ell} \in \mathbb Q.\end{equation}
\end{theo}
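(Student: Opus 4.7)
The strategy is to diagonalize $(2D_{\bar z}-k)$ in a Fourier basis adapted to the quasi-periodic boundary conditions defining $\mathscr H$, exploit the fact that $U_0$ has only three Fourier modes, and recast $\tr T_k^{2\ell}$ as a sum over closed walks on a hexagonal lattice.

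\emph{Step 1 (Fourier expansion).} The plane waves $e_\xi(z) = e^{\frac i 2(\bar\xi z + \xi \bar z)}$, indexed by $\xi$ in a lattice $\mathcal L \subset \CC$ compatible with \eqref{eq:4and11} (a finite union of translates of $3\Gamma^*$, one for each coset appearing in the twisted periodicity), form a basis of $\mathscr H$. In this basis $(2D_{\bar z}-k)e_\xi = (\xi-k)e_\xi$. A direct computation using \eqref{eq:defU} shows that multiplication by the $j$-th plane wave of $U_0$ shifts $\xi\mapsto \xi+i\omega^j$ with weight $\omega^j$, while multiplication by the $j$-th plane wave of $U_0(-\cdot)$ shifts by $-i\omega^j$ with the same weight. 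Since the matrix $\mathcal U$ is off-diagonal on $\CC^2$, each application of $T_k = (2D_{\bar z}-k)^{-1}\mathcal U$ swaps components and strictly alternates between ``$U$-shifts'' ($+i\omega^j$) and ``$U(-\cdot)$-shifts'' ($-i\omega^j$).

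\emph{Step 2 (Walk expansion).} This realizes $T_k$ as an explicit weighted shift on $\ell^2(\mathcal L)\oplus \ell^2(\mathcal L)$, so that
\begin{equation*}
\tr T_k^{2\ell} \;=\; \sum_{\gamma}\, \Bigl(\prod_{m=1}^{2\ell} w(e_m)\Bigr)\, \prod_{m=0}^{2\ell-1}\frac{1}{\xi_m-k},
\end{equation*}
where $\gamma = (\xi_0, \xi_1, \ldots, \xi_{2\ell} = \xi_0)$ runs over closed walks in $\mathcal L$ with edges $e_m \in \{\pm i\omega^j : j=0,1,2\}$ alternating in sign, and $w(e_m) \in \{1,\omega,\omega^2\}$ are the associated weights. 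For $\ell\ge 2$ the operator $T_k^{2\ell}$ is trace class and this sum converges absolutely.

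\emph{Step 3 (Evaluation).} Summing $\xi_0$ over $\mathcal L$ for each fixed walk shape yields inner sums $\sum_{\xi_0\in\mathcal L}\prod_m (\xi_0 + s_m - k)^{-1}$, where $s_m$ are partial sums of the edges (depending only on the walk shape). By the $k$-independence of $\tr T_k^{2\ell}$ guaranteed by \eqref{eq:specc}, the $k$-dependence across walks must cancel; what remains reduces via partial fractions and Eisenstein-type identities on the hexagonal lattice $\mathcal L$ to a rational multiple of $\pi/\sqrt 3$ (the covolume of $3\Gamma^*$ is a rational multiple of $\sqrt 3$, from which this factor naturally arises). The $\omega$-valued weights, summed over each $\mathbb Z/3$-orbit of walks under the rotational symmetry of the model, combine into rational numbers, giving $\tr T_k^{2\ell} = (\pi/\sqrt 3)\, q_\ell$ with $q_\ell \in \mathbb Q$.

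\emph{Main obstacle.} The hardest step is Step 3: proving that all contributions combine to produce exactly one power of $\pi/\sqrt 3$ (rather than higher powers of $\pi$) and that the $\omega$-dependent pieces cancel to rationals. The $k$-independence in \eqref{eq:specc} must force delicate cancellations among higher-order Eisenstein-type lattice sums; making this precise requires careful organization of the walk sum using the hexagonal symmetry of $\mathcal L$ and of the edge set $\{\pm i\omega^j\}$. Justifying absolute convergence and the rearrangements used for large $\ell$ is a secondary but non-trivial technical challenge.
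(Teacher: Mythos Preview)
Your Steps~1 and~2 are essentially the same as the paper's setup: the Fourier/walk expansion is exactly how the paper arrives at the fact that $\langle A_k^\ell e_i,e_i\rangle$ is a rational function of $k$ of degree $-2\ell$ with poles in explicit lattice cosets (Lemma~\ref{properties}). The divergence is entirely in Step~3, and there your proposal has a genuine gap.

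You propose to fix $k$ and evaluate the resulting lattice sums directly via partial fractions and ``Eisenstein-type identities''. As you yourself note, this leaves unexplained why only a single power of $\pi$ survives and why the $\omega$-weights collapse to rationals; the vague appeal to $\mathbb Z/3$-orbits and hexagonal symmetry is not a proof, and Eisenstein sums on a rank-two lattice typically produce values in $\pi^2\mathbb Q$ or worse. The paper avoids this problem completely by exploiting the $k$-independence in a different way: rather than fixing $k$, it \emph{averages} the trace over a segment of $k$-values, $3\tau_\ell=\int_0^3\tr(A_{t\omega^2/\sqrt 3}^\ell)\,dt$. Combining this integral with the lattice translation property (the third item of Lemma~\ref{properties}) converts the sum over one lattice direction into an integral over all of $\mathbb R$. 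What remains is a one-dimensional sum of line integrals of a rational function, and consecutive terms differ by a contour integral around finitely many poles. A single application of the residue theorem (hence a single factor of $2\pi i$) together with Abel summation gives the finite residue formula \eqref{eq:taul}. The rationality then follows because the residues of a rational function with coefficients in $\mathbb Q(\omega/\sqrt 3)$ at poles in $\mathbb Q(\omega/\sqrt 3)$ stay in that field, and reality forces the answer into $\tfrac{\pi}{\sqrt 3}\mathbb Q$.

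In short: the idea you are missing is to trade the lattice sum for an integral in $k$ before doing any evaluation. This is what produces exactly one $\pi$ and sidesteps the Eisenstein-sum bookkeeping entirely.
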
 
In addition, we are able to express the rational numbers $q_{\ell} \in \mathbb Q$ in terms of a finite sum involving residues of rational functions which is fully presented in Theorem \ref{traceresult}. A generalization of Theorem \ref{theo:traces} which extends this result to more general potentials $U$ \eqref{eq:potential} is presented in Theorem \ref{rat}.
As we show in \S  \ref{sec:races_to_infinity}, it is already possible to conclude directly from Theorem \ref{theo:traces} that 
\begin{theo}
\label{corr:infinity}
Let $U = U_0$ with $U_0$ as in  \eqref{eq:defU}. There exist infinitely many magic $\alpha$'s, that is,
\[ | \mathcal A | = \infty . \]
\end{theo}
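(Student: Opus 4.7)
The plan is a proof by contradiction driven by the transcendence of $\pi$. Suppose $|\mathcal A| < \infty$, and write $\beta_\alpha := \alpha^{-2}$, $\mu := \pi/\sqrt 3$, so that Theorem \ref{theo:traces} reads
\[
s_\ell := \sum_{\alpha \in \mathcal A} \beta_\alpha^\ell = \mu\, q_\ell,\qquad q_\ell \in \mathbb Q,\qquad \ell \ge 2.
\]
The earlier trace computation $\tr_{\mathscr H}(T_k^4) = 8\pi/\sqrt 3$ recalled in the introduction provides the essential nonvanishing datum $q_2 = 8 \ne 0$.

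Next I would exploit finiteness of the sum to put $(s_\ell)$ into the class of linearly recursive sequences. Let $B$ be the set of distinct values among $\{\beta_\alpha : \alpha \in \mathcal A\}$ with positive integer multiplicities $m_\beta \ge 1$. The generating function
\[
\sum_{\ell \ge 1} s_\ell z^\ell = \sum_{\beta \in B} \frac{m_\beta \beta z}{1-\beta z}
\]
is rational with minimal denominator $\prod_{\beta \in B}(1 - \beta z)$, so $(q_\ell)_{\ell \ge 2}$ satisfies a minimal linear recurrence of order $|B|$ with characteristic polynomial $P(x) = \prod_{\beta \in B}(x - \beta)$. By the classical Hankel--Kronecker correspondence, minimality of the recurrence forces the $|B| \times |B|$ Hankel matrix built from a block of consecutive values $q_\ell$ to be invertible. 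Writing the recurrence as a linear system in the coefficients of $P$ and inverting this rational system by Cramer's rule then yields $P \in \mathbb Q[x]$, and in particular every $\beta \in B$ is an algebraic number.

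Consequently $s_\ell = \sum_{\beta \in B} m_\beta \beta^\ell$ is algebraic for every $\ell$. On the other hand $\mu = \pi/\sqrt 3$ is transcendental by Lindemann's theorem, so the identity $s_\ell = \mu q_\ell$ with $q_\ell \in \mathbb Q$ can hold with algebraic left-hand side only when $q_\ell = 0$. Hence $q_\ell = 0$ for every $\ell \ge 2$, contradicting $q_2 = 8$. The only step that needs some care is the minimality of the linear recurrence (i.e.\ full rank $|B|$ of the Hankel matrix), and this is automatic because each $\beta$ enters $s_\ell$ with a strictly positive integer coefficient $m_\beta$, so the minimal annihilator of the sequence is exactly $P$ of degree $|B|$.
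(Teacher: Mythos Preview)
Your proof is correct, and it reaches the same contradiction as the paper but by a genuinely different route. The paper argues via Newton's identities: assuming finitely many eigenvalues $\lambda_1,\dots,\lambda_N$ (of $A_k^2$), the elementary symmetric polynomials satisfy $e_n(\lambda_1,\dots,\lambda_N)=0$ for $n>N$, while Newton's formula expresses each $e_n$ as a polynomial in the power sums $p_i=\tr A_k^{2i}\in\tfrac{\pi}{\sqrt 3}\mathbb Q$. Substituting, $e_n=0$ becomes a polynomial relation for $\pi$ of degree $n$ over $\mathbb Q(\omega/\sqrt 3)$, with leading coefficient proportional to $q_2^{\,n}$; transcendence of $\pi$ then forces $q_2=0$, contradicting the computed value.

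Your argument instead packages the assumed finite sum $s_\ell=\sum_\beta m_\beta\beta^\ell$ as a linearly recurrent sequence and uses the Vandermonde factorization of the Hankel matrix $H^{(2)}_{|B|}=(s_{2+i+j})=V^{T}\operatorname{diag}(m_\beta\beta^{2})V$ to see it is invertible (here $\beta_\alpha=\alpha^{-2}\neq 0$ is what makes the diagonal nondegenerate). Cramer's rule over the rationals $q_\ell$ then forces the characteristic polynomial $P$ into $\mathbb Q[x]$, so every $\beta$ is algebraic and hence each $s_\ell$ is algebraic, which is incompatible with $s_\ell=\mu q_\ell$ and $q_2=8$. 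This buys you a slightly stronger intermediate conclusion (algebraicity of the individual $\beta$'s) at the cost of invoking Hankel--Kronecker theory; the paper's Newton-identity argument is more direct but extracts less structure. Both proofs rest on exactly the same two external inputs: Theorem~\ref{theo:traces} and the single nonzero trace $\tr_{\mathscr H}(T_k^4)=8\pi/\sqrt 3$.
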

This theorem will follow from the more general Theorem \ref{sec:races_to_infinity} and the observation that by the aforementioned explicit computation $\tr_{\mathscr H} (T_k^4)  = 8{\pi}/{\sqrt{3}}$ for $U=U_0$ there is at lest one complex magic angle. 
We then focus on real magic angles. Since the operator $T_k^2$ is Hilbert-Schmidt, we can use the regularized determinant to study real magic $\alpha.$ Compared with the initial approach proposed in \cite{magic}, this approach has two advantages. Unlike the series expansion in \cite{magic,lawa}, the regularized determinant is an entire function with explicit error bounds in terms of the Hilbert-Schmidt norm. In addition, the Taylor coefficients of the determinant are polynomials of traces as in Theorem \ref{theo:traces}. This leads to 
\begin{theo}
\label{theo:existence}
The chiral Hamiltonian with $U=U_0$ and $U_0$ as in \eqref{eq:defU}, exhibits a flat band of multiplicity 2 at a real magic $\alpha_{*} \in (0.583,0.589)$,
which is minimal, in the sense that the Hamiltonian does not possess a flat band for any $\alpha$ satisfying $\vert \alpha \vert < \vert \alpha_{*} \vert$, that is,
\[     | \mathcal A \cap  (0.583,0.589) | = 1, \ \ \ \mathcal A \cap D_{\mathbb C}  ( 0 , \alpha^* ) = 
\emptyset , \]
where the counting $ | \bullet | $ respects multiplicities. In particular, the flat bands of multiplicity 2 are uniformly gapped from all other bands.
\end{theo}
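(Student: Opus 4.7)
The plan is to reduce Theorem \ref{theo:existence} to the location and order of the smallest-modulus zeros of the regularized Fredholm determinant $G(\lambda) := \det\nolimits_2(I - \lambda T_k^2)$, an entire function of $\lambda$ since $T_k^2$ is Hilbert--Schmidt. Its zeros, with algebraic multiplicity, are exactly the reciprocals of the eigenvalues of $T_k^2$, so by the Birman--Schwinger characterization \eqref{eq:specc}, $\alpha \in \mathcal A \setminus \{0\}$ iff $\lambda = \alpha^2$ is a zero of $G$. The three assertions of the theorem are then equivalent to: $G$ has no zeros in $|\lambda| \leq (0.583)^2$, and $G$ has exactly one zero (of order $2$) in $|\lambda| \leq (0.589)^2$, located on the positive real axis.

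Expand $G(\lambda) = 1 + \sum_{n\geq 1} c_n \lambda^n$. The Plemelj--Smithies formula expresses each $c_n$ as a universal polynomial in the first $n$ traces $\tr(T_k^{2j})$; Theorem \ref{theo:traces} supplies every such trace (for $j \geq 2$) as an explicit rational multiple of $\pi/\sqrt 3$, while $\tr(T_k^2)$ is computed directly from the integral kernel of $T_k^2$. Consequently every $c_n$ is an explicit algebraic number, and the truncation $G_N(\lambda) := 1 + \sum_{n=1}^N c_n \lambda^n$ is a polynomial with fully specified coefficients.

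The analytic core is a certified Rouch\'e argument on two circles. Using the Weyl-type bound $|\det\nolimits_2(I - \lambda A)| \leq \exp(\tfrac12 |\lambda|^2 \|A\|_{HS}^2)$ together with Cauchy estimates on a circle $|\lambda| = R$, one obtains a tail bound $\sup_{|\lambda|\leq \rho}|G - G_N| \leq \eta_N(\rho,R)$ decaying geometrically in $(\rho/R)^{N+1}$. For $N$ sufficiently large, rigorous interval arithmetic on the rationals $q_\ell$ of Theorem \ref{theo:traces} certifies: (a) on $|\lambda| = (0.583)^2$ one has $|G_N| > \eta_N$ and $G_N$ has no zeros in the closed disk, so by Rouch\'e $G$ has none either -- excluding all magic angles in $|\alpha| \leq 0.583$; (b) on $|\lambda| = (0.589)^2$, the same inequality $|G_N| > \eta_N$ holds and $G_N$ has exactly one real positive double root inside, strictly separated from the boundary, so by Rouch\'e $G$ has exactly two zeros (counted with multiplicity) in the annulus $(0.583)^2 < |\lambda| < (0.589)^2$. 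Since $G$ has real Taylor coefficients, its complex zeros come in conjugate pairs; Hurwitz's theorem applied on a small contour around the certified double root of $G_N$ rules out a split into a conjugate pair once $\eta_N$ is small enough, forcing the two zeros of $G$ to coincide at a single real positive $\lambda_* \in ((0.583)^2, (0.589)^2)$ of order exactly $2$. Undoing $\alpha_* = \sqrt{\lambda_*}$ then yields the theorem.

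The principal obstacle is the quantitative matching in the Rouch\'e step: $\|T_k^2\|_{HS}$ is of order unity on the scale $|\lambda| \sim 0.3$ of interest, so the exponential prefactor in the tail bound is not small, and $N$ must be taken large enough that the genuine decay of $c_n$ -- far stronger than the generic Hilbert--Schmidt bound, thanks to the structural rationality of Theorem \ref{theo:traces} -- produces a margin exceeding $\eta_N$ on both critical circles. Pinning down the precise location and order of the real positive root of $G_N$ is then handled by combining interval arithmetic on the rational coefficients with an elementary lower bound on $|G_N'|$ in that region, in the spirit of the computer-assisted approach of \cite{lawa} but rendered considerably more tractable by the determinant-trace framework made available by Theorem \ref{theo:traces}.
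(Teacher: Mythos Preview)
Your proposal shares the paper's core strategy --- regularized Fredholm determinant, Plemelj--Smithies coefficients built from the traces of Theorem~\ref{theo:traces}, and an explicit Hilbert--Schmidt tail bound --- but it diverges in one respect that creates a genuine gap. You work with $G(\lambda)=\det_2(I-\lambda T_k^2)$ on the $\mathbb C^2$-valued space, where the sought zero must have order~$2$, whereas the paper passes to the scalar operator $A_k=\mathcal A_k|_{L^2_{(1,1)}}$ (see \eqref{eq:T2A0}--\eqref{eq:gapped}), for which the corresponding zero is \emph{simple}. This is not cosmetic. Your Rouch\'e count on $|\lambda|=(0.589)^2$ delivers two zeros with multiplicity, but nothing in your argument forces them to coincide: Hurwitz cannot distinguish a genuine double zero of $G$ from two nearby simple real zeros, and reality of the Taylor coefficients does not exclude a conjugate pair either. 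Moreover, although $G$ is in fact the square of $\det_2(I-\lambda A_k)$, the truncation $G_N$ is \emph{not} a perfect-square polynomial, so your premise that ``$G_N$ has exactly one real positive double root'' is generically false.

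The paper settles the multiplicity question by a much simpler device. Working with $A_k$, it bounds the derivative of the real function $\alpha\mapsto\det_2(1-\alpha^2\widehat A_k)$ on $(1/3,0.6)$ using the same tail estimate and shows it is strictly negative there; a certified sign change between $0.583$ and $0.589$ then yields existence, while monotonicity gives uniqueness \emph{and} simplicity at once. The region $|\alpha|<1/3$ is disposed of by the elementary operator-norm bound $\|\widehat A_0\|\le 9$. A simple eigenvalue of $A_k$ translates, via \eqref{eq:gapped}, into a flat band of multiplicity exactly~$2$. Your Rouch\'e argument on the inner circle is a reasonable alternative route to the minimality over complex $\alpha$, but for the multiplicity you must either switch to $A_k$ or explicitly invoke the factorization $G=(\det_2(I-\lambda A_k))^2$; the Hurwitz step as written does not suffice.

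One minor point: in the $\det_2$ Plemelj--Smithies recursion the first ``trace'' is set to zero by construction (cf.\ \eqref{eq:muk}), so you neither need nor can use $\tr(T_k^2)$ --- the operator $T_k^2$ is Hilbert--Schmidt but not trace class.
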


\noindent
{\bf Remark.} 
Compared with results in \cite{lawa} which require floating-point arithmetic, our proof of existence relies only on exact symbolic computations, the exact evaluation of residues to compute traces of powers of $T_k$ and the summation of finitely many matrix entries to estimate the Hilbert-Schmidt norm.

\section{Preliminaries}

From now on, we consider a potential $U\in C^{\infty}(\mathbb C/\Gamma; \mathbb C)$ satisfying the first two symmetries of \eqref{eq:potential}. The last symmetry $U(\bar z)= \overline{U(z)}$ will only be needed in Corollary \ref{rat} to ensure that all traces are real. We recall that an orthonormal basis of $L^2(\mathbb C/\Gamma; \mathbb C)$ is given by setting 
$$e_{\nu}(z):=e^{\frac{i}{2} (\bar{\nu}z+\nu z)}/\sqrt{\mathrm{Vol}(\mathbb C/\Gamma)},\quad \nu \in\Gamma^*=\frac{1}{\sqrt{3}}\big(\mathbb Z+\omega\mathbb Z\big).  $$

We can express the potential $U$ in this basis.  A straightforward calculation gives 
\begin{prop}\label{prop:1}
Let $a=\frac{4\pi i}{3}(\omega a_1+\omega^2 a_2)\in\Gamma/3$, $ a_j \in \mathbb Z $. Then $u\in L^2(\mathbb C/\Gamma)$ satisfies 
\begin{equation*}
u(z+a)=\bar\omega^{(a_1+a_2)} u(z),\quad u(\omega z)={\omega} u(z),
\end{equation*}
if and only if
\begin{equation}
\label{eq:generalform}
u(z)=\sum_{n\in\mathbb Z^2} c_{n}e_{k_n}
\end{equation}
where for $n \in \ZZ^2$
$$
c_{n}=(u,e_{k_n})_{L^2(\mathbb C/\Gamma)},\quad k_n={\frac{\omega^2(2+3n_1)-\omega(2-3n_2)}{\sqrt 3}},
$$
satisfies 
\[ c_{n}=\omega c_{(-n_2)(n_1-n_2-1)}=\omega^{2} c_{(n_2-n_1+1)(-n_1)}.\] 
If in addition $\overline{u(\bar z)}=u(z)$ then
\[\overline{c_{n}}=c_{(-n_2)(-n_1)}
=\omega c_{n_1(n_1-n_2-1)}=\omega^{2} c_{(n_2-n_1-1)n_2}.\]
\end{prop}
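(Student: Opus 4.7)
My plan is to expand $u \in L^2(\CC/\Gamma)$ in the orthonormal basis $\{e_\nu\}_{\nu \in \Gamma^*}$ introduced above, translate each of the three symmetry conditions into a linear condition on the coefficients $c_\nu = (u,e_\nu)$, and then show these conditions match the formulas stated in the proposition. The whole argument is direct Fourier bookkeeping; the only mildly subtle point is keeping track of the order-three affine automorphisms of $\ZZ^2$ induced on the mode labels.

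First I would handle the translation symmetry. A short computation gives $e_\nu(z+a) = e^{i\Re(\bar\nu a)}\,e_\nu(z)$, so matching Fourier coefficients in $u(z+a) = \bar\omega^{a_1+a_2}u(z)$ forces $c_\nu = 0$ unless $e^{i\Re(\bar\nu a)} = \bar\omega^{a_1+a_2}$ for every $a = \frac{4\pi i}{3}(\omega a_1+\omega^2 a_2)$ with $a_j \in \ZZ$. Writing $\nu = \tfrac{1}{\sqrt 3}(p+q\omega)$ and expanding, a direct calculation yields
\[
\Re(\bar\nu a) = -\tfrac{2\pi}{3}\bigl(p a_1 + (q-p)a_2\bigr),
\]
so the matching condition reduces, for all $a_1,a_2 \in \ZZ$, to the pair of congruences $p \equiv 1,\ q \equiv 2 \pmod 3$. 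The formula $k_n = \frac{\omega^2(2+3n_1)-\omega(2-3n_2)}{\sqrt 3}$ is then verified to be a bijective affine parametrization of this sublattice of $\Gamma^*$ by $(n_1,n_2) \in \ZZ^2$.

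Next I would handle the rotational symmetry. Using $\omega\bar\omega = 1$ one checks directly that $e_\nu(\omega z) = e_{\bar\omega\nu}(z)$, so matching coefficients in $u(\omega z) = \omega u(z)$ gives the functional relation $c_{\omega\nu} = \omega c_\nu$, equivalently $c_\nu = \omega^2 c_{\omega\nu}$. To transport this into the $n$-indexing, I would compute $\omega k_n$ explicitly by expanding and using $1+\omega+\omega^2 = 0$; this exhibits multiplication by $\omega$ as an order-three affine automorphism of $\ZZ^2$ acting on the label $n$, which gives the first displayed relation on $c_n$. Iterating it once more (equivalently, computing $\omega^2 k_n$) produces the second one.

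Finally, for the antilinear symmetry I would observe that $\overline{e_\nu(\bar z)} = e_{-\bar\nu}(z)$, so matching coefficients in $\overline{u(\bar z)} = u(z)$ yields $\overline{c_\nu} = c_{-\bar\nu}$. A direct computation, using $\bar\omega = \omega^2$, shows $-\overline{k_n} = k_{(-n_2,-n_1)}$, which is the first of the three reality identities; composing this with the two rotational relations from the previous step produces the remaining two. The converse direction in each case is immediate: a function defined by a Fourier series of the stated form manifestly satisfies the three symmetries, since each of the identities above can be read in reverse. The only potential obstacle is arithmetic care in computing the affine maps on $\ZZ^2$ induced by $\nu \mapsto \omega\nu$ and $\nu \mapsto -\bar\nu$; once these are written down, everything else is inspection.
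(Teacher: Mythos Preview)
Your proposal is correct and is precisely the ``straightforward calculation'' the paper alludes to; the paper gives no proof of this proposition beyond that phrase, and your outline---translating each symmetry into a Fourier-coefficient identity via $e_\nu(z+a)=e^{i\Re(\bar\nu a)}e_\nu(z)$, $e_\nu(\omega z)=e_{\bar\omega\nu}(z)$, and $\overline{e_\nu(\bar z)}=e_{-\bar\nu}(z)$, then rewriting the resulting lattice maps in the $(n_1,n_2)$-parametrization---is exactly the intended computation.
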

Our aim is to obtain trace formulae for the powers of the compact operator $T_k$ defined by
in \eqref{eq:BS}.
Since odd powers of $T_k$ have only off-diagonal components, it is clear that the
traces of odd powers vanish. Thus, it is sufficient to compute the traces of powers of 
\begin{equation}
T_k^2=  \begin{pmatrix} (2D_{\bar z}-k)^{-1}U(z)(2D_{\bar z}-k)^{-1}U(-z) & 0 \\ 0 & (2D_{\bar z}-k)^{-1}U(-z)(2D_{\bar z}-k)^{-1}U(z) \end{pmatrix}. 
\end{equation}
The invariance of the trace under cyclic permutations shows that it is sufficient to compute traces of powers of 
\begin{equation}
\label{eq:Ak}
\mathcal A_k:= (2D_{\bar z}-k)^{-1}U(z)(2D_{\bar z}-k)^{-1}U(-z):L^2(\mathbb C/\Gamma; \mathbb C)\to L^2(\mathbb C/\Gamma; \mathbb C),\quad k\notin \Gamma^*.
\end{equation}

We shall study traces of powers of $\mathcal A_k$ on smaller $L^2$ spaces, which we define below, for $(p_1,p_2)\in \mathbb Z_3^2$, by
 \begin{equation}
\begin{split}
\label{eq:spaces}
L^2_{(p_1,p_2)}  ( \mathbb C/ \Gamma ; \mathbb C ) &:= \Big\{ u \in L^2 : u ( z +  2 i ( \omega a_1 + a_2 \omega^2 ) ) =
e^{ i ( a_1 p_1 + a_2 p_2 )} u ( z );\\
& \qquad \ \ a_j \in \tfrac{ 2\pi }  {3} \mathbb Z \Big\}
\end{split}
\end{equation}
whose $\CC^2$-valued analogues are defined, using \eqref{eq:4and11}, as
 \begin{equation}
\begin{split}
L^2_{(p_1,p_2)}  ( \mathbb C/ \Gamma ; \mathbb C^2 ) &:= \Big\{ u \in L^2 : \mathscr L_{a}u ( z  ) =
e^{ i ( a_1 p_1 + a_2 p_2 )} u ( z ); a_j \in \tfrac{ 2\pi }  {3} \mathbb Z \Big\}.
\end{split}
\end{equation}

We remark that the operator $(2D_{\bar z}-k)^{-1}$ acts diagonally on the Fourier basis and thus preserves the $L^2_{(p_1,p_2)}$ spaces. On the other hand, multiplication by $U(\pm z)$ does not preserve the space but one has by the translational symmetry defined in \eqref{eq:potential}
$$ U(\pm z): L^2_{(p_1,p_2)}\to L^2_{(p_1\mp1,p_2\mp1)},\quad (p_1,p_2)\in \mathbb Z_3^2. $$
In total, we have
$$  L^2_{(p_1,p_2)}\xrightarrow{U(-z)}L^2_{(p_1+1,p_2+1)}\xrightarrow{(2D_{\bar z}-k)^{-1}}
L^2_{(p_1+1,p_2+1)} \xrightarrow{U(z)}
L^2_{(p_1,p_2)}\xrightarrow{(2D_{\bar z}-k)^{-1}}L^2_{(p_1,p_2)}
. $$
This shows that we can restrict the operator $\mathcal A_k$ to the subspaces $L^2_{(p_1,p_2)}$. From now on, we will denote by 
$A_k$ the restriction of $\mathcal A_k$ to $L^2_{(1,1)}$. 
We then define the unitary multiplication operator 
\[ \begin{gathered} U_{(p_1,p_2)}:L^2_{(0,0)}(  \mathbb C/\Gamma , \mathbb C^2  ) \to L^2_{(p_1,p_2)}(  \mathbb C/\Gamma , \mathbb C^2  ) , \\ 
U_{(p_1,p_2)} v (z):=e^{\frac{i}{2}(z \bar p+ \bar z p)} v ( z ) , \ \ p = \tfrac 1 {\sqrt 3 } (\bar \omega p_1 - \omega p_2 ) , \ \ \  p_j \in \mathbb Z_3, \\  
U_{(p_1,p_2)}T_k U^*_{(p_1,p_2)} = T_{k-p}, \ \ \ k \notin \Gamma^* .
\end{gathered} \]
The $k$-independence of the spectrum of $T_k$ implies then
\begin{equation}
\label{eq:T2A0} \begin{split}  \Spec_{ L^2_{(0,0)} (  \mathbb C/\Gamma , \mathbb C^2  ) }  ( T_k^2 ) \setminus \{ 0 \} & = \Spec_{L^2_{(1,1)}
 (  \mathbb C/\Gamma , \mathbb C^2  ) } ( T_k^2 ) \setminus \{ 0 \} \\
 & =  \Spec_{L^2_{(1,1)} 
  (  \mathbb C/\Gamma , \mathbb C ) } ( A_k ) \setminus \{ 0 \} , \end{split}  
  \end{equation}
  where $ k \in D ( 0 , r ) \setminus \{ 0 \}$, and the last equality is 
  meant in the sense of sets: multiplicities of elements in the top row are twice the multiplicities of elements in the bottom row.
 \\ 
We also note that $ A_k $ is defined for $ k \in D ( 0 , r ) $ since $ D_{\bar z}^{-1} $ is
defined on $ L^2_{(p,p)} $, $ p \not \equiv 0 \!\!\! \mod 3$. 
Since $ \mathbb C \ni k \to \mathcal A_{k}|_{ L^2_{(1,1)}} $ is an analytic family of operators with compact resolvents
and the spectrum is independent of $ {k }\in \CC\setminus 3\Gamma^*$, 
 it follows  that  $\Spec(A_{k }) = \Spec(A_{ 0})$ \cite[Theorem $1.10$]{kato}.
From \eqref{eq:T2A0} we obtain, as sets,
\begin{equation}
\label{eq:T2A}  \Spec_{L^2_{(0,0)} }  ( T_p^2 ) \setminus \{ 0 \} =  \Spec_{ L^2_{(1,1)} } ( A_{k }  ) \setminus \{ 0 \},  \ \ \ p \in D( 0,r ) \setminus \{ 0 \}, \ \ k \in D ( 0 , r ) , 
\end{equation}
with multiplicities on the left, twice the multiplicities on the right. Since $k=0$ is included in the set of possible $k$ for $A_k$. Indeed, the set of possible values of $k$ is $\mathbb C\setminus \big((3\Gamma^*-i)\cup (3\Gamma^*+i)\big)$. We conclude together with \cite[Theorem $2$]{bhz2} that
\begin{equation} 
\label{eq:gapped}
\operatorname{dim}\ker_{\mathscr H }(D(\alpha)) =\operatorname{dim}\ker_{L^2_{(1,1)} }(A_0-\alpha^{-2}).
\end{equation}
We end this preliminary section by stating and proving the main three properties we will use for our calculation.
\begin{lemm}
\label{properties}
Consider a potential $U\in C^{\infty}(\mathbb C/\Gamma; \mathbb C)$ satisfying the first two symmetries of \eqref{eq:potential} with a finite number of non zero Fourier mode in its decomposition \eqref{eq:generalform}. Define the operator $ A_k$ for $k\notin (3\Gamma^*-i)\cup (3\Gamma^*+i)$, where $i:=-(\omega^2-\omega)/\sqrt 3$, to be the restriction of $\mathcal A_k$ defined in \eqref{eq:Ak} on the space $L^2_{(1,1)}$. For $\ell\geq 2$, one has:
\begin{itemize}
\item The trace is constant in $k$ 
\begin{equation}
\label{eq:constant}
\tr( A_k^{\ell})=\tau_{\ell} \text{ independent of }k\in \mathbb C \setminus (3\Gamma^*-i)\cup (3\Gamma^*+i).
\end{equation}
\item The function $\mathbb C \setminus (3\Gamma^*-i)\cup (3\Gamma^*+i) \ni k\mapsto \langle A_k^{\ell}e_{m},e_{m}\rangle_{L^2}$ is a finite sum of rational fractions on the complex plane $\mathbb C$ with degree equal to $-2\ell$ and with $($a finite number of$)$ poles contained in $(3\Gamma^*-i)\cup (3\Gamma^*+i)$.
\item For any $\gamma\in \Gamma^*$ and for any $k\notin (3\Gamma^*-i)\cup (3\Gamma^*-2i)$, we have
$$ \langle A_k^{\ell}e_{3\gamma+i},e_{3\gamma+i}\rangle_{L^2}=\langle A_{k-3\gamma}^{\ell}e_{i},e_{i}\rangle_{L^2}.$$
\end{itemize}

\end{lemm}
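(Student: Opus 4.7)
The plan is to expand $A_k$ directly in the Fourier basis $\{e_\nu\}$ of $L^2_{(1,1)}$, whose indices lie in $i+3\Gamma^*$, and combine this Fourier picture with the spectral invariants already recorded in the preceding discussion. In this basis the resolvent $(2D_{\bar z}-k)^{-1}$ is diagonal with a simple pole in $k$ at each Fourier eigenvalue, while multiplication by $U(\pm z)$ shifts the Fourier index by an element of the (finite) Fourier support of $U$, alternating between $L^2_{(1,1)}$ and $L^2_{(2,2)}$ (whose indices lie in $-i+3\Gamma^*$).

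For the constancy of the trace (first bullet), I would first check that $A_k$ is Hilbert--Schmidt on $L^2_{(1,1)}$: expanding $A_k e_\nu$ in the Fourier basis gives, after collecting terms, squared Hilbert--Schmidt norms that reduce to finitely many convergent planar lattice sums of the form $\sum_\nu|\nu-k'|^{-2}|\nu-k''|^{-2}$, so $A_k^\ell$ is trace class for $\ell\ge 2$. The unitary intertwiners $U_{(p_1,p_2)}$ together with Kato's analytic perturbation theorem (both invoked above \eqref{eq:T2A}) give $\Spec(A_k)=\Spec(A_0)$ with algebraic multiplicities, and Lidskii's theorem then yields $\tr(A_k^\ell)=\sum_j\lambda_j^\ell$, a quantity independent of $k$.

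For the rational structure (second bullet), expand $\langle A_k^\ell e_m,e_m\rangle$ by inserting the Fourier basis between each of the $2\ell$ factors. Because $U$ has finite Fourier support, the number of closed length-$2\ell$ Fourier-index paths from $m$ back to $m$ is finite, and each such path contributes a term
\[
(\text{product of }\hat U\text{ coefficients})\cdot\prod_{j=1}^{2\ell}(\mu_j-k)^{-1},
\]
where $\mu_j$ alternates between $-i+3\Gamma^*$ and $i+3\Gamma^*$ along the path. This finite sum is rational of degree $-2\ell$ in $k$ with poles contained in $(3\Gamma^*-i)\cup(3\Gamma^*+i)$.

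For the shift relation (third bullet), introduce the unitary operator $S_\gamma$ on $L^2_{(1,1)}$ of multiplication by $e_{3\gamma}(z)$; since $3\gamma\in 3\Gamma^*$ this introduces no quasi-momentum shift, and one has $S_\gamma e_\nu=e_{\nu+3\gamma}$. Multiplication operators commute, so $S_\gamma$ commutes with $U(\pm z)$, and a direct check on Fourier modes yields $S_\gamma^{-1}(2D_{\bar z}-k)^{-1}S_\gamma=(2D_{\bar z}-(k-3\gamma))^{-1}$. Consequently $S_\gamma^{-1}A_k S_\gamma=A_{k-3\gamma}$, and the matrix element of $A_k^\ell=S_\gamma A_{k-3\gamma}^\ell S_\gamma^{-1}$ at $e_{3\gamma+i}$ is precisely $\langle A_{k-3\gamma}^\ell e_i,e_i\rangle$. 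The only real bookkeeping is in the second bullet: one must correctly track the two sublattices $\pm i+3\Gamma^*$ through each resolvent step so that the pole set is exactly $(3\Gamma^*-i)\cup(3\Gamma^*+i)$ and not a larger set; the first and third bullets reduce respectively to Lidskii's theorem and a direct commutation check.
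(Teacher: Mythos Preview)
Your proposal is correct and, for the first two bullets, follows the same Fourier-expansion idea as the paper: the paper proves the rational structure and the shift relation simultaneously by an induction on $\ell$, showing that $A_k^\ell e_{3\gamma+i}=\sum_{\nu\in F}R_{\nu+3\gamma}(k)e_{\nu+3\gamma}$ with $R_{\nu+3\gamma}(k)=R_\nu(k-3\gamma)$, which is exactly your closed-path expansion unrolled step by step.

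The one genuine difference is your treatment of the third bullet. Instead of carrying the shift relation through the induction as the paper does, you observe the operator identity $S_\gamma^{-1}A_kS_\gamma=A_{k-3\gamma}$ (multiplication by $e_{3\gamma}$ commutes with the potential and conjugates $2D_{\bar z}-k$ to $2D_{\bar z}-(k-3\gamma)$), from which the matrix-element identity is immediate. This is cleaner and more conceptual than the paper's bookkeeping: it isolates the translation covariance as a single algebraic fact rather than tracking it through $\ell$ resolvent-plus-potential steps. The paper's inductive packaging, on the other hand, has the small advantage of producing the pole locations and the shift relation in one pass, which feeds directly into the residue formula of the next theorem without a separate argument.
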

\begin{proof}
 The first point is a consequence of the independence of the spectrum of $T_k$ in $k$ (see \cite[\S 2.3]{bhz2}) as well as the relation \eqref{eq:T2A}.
\\
For the last two points, we prove by induction that $k\mapsto  A_k^{\ell}e_{3\gamma+i}$, where $\gamma\in \Gamma^*$, is of the form 
\begin{equation}
\label{eq:Induc}
A_k^{\ell}e_{3\gamma+i}=\sum_{\nu\in F}R_{\nu+3\gamma}(k)e_{\nu+3\gamma}, 
\end{equation}
where $F\subset (3\Gamma^*+i)$ is a finite set and $R_{\nu}(k)$ is a sum of rational fraction of degree $-2\ell$ with poles located on $(3\Gamma^*-i)\cup (3\Gamma^*+i)$. Moreover, we will prove that the one has the relation $R_{\nu+3\gamma}(k)=R_{\nu}(k-3\gamma).$ 
\\
The result is clear for $\ell=0$. Suppose the result true for $\ell$, let's prove it holds for $\ell+1$.
The main observation is that multiplication by $U(\pm z)$ acts as a shift on the Fourier basis. The multiplication by $U(-z)$ sends $e_{\nu}$ to a linear combination of $e_{\ell}$ for $\ell \in (3\Gamma^*+2i)$. Then applying $(D(0)-k)^{-1}$ multiplies the coefficient of $e_{\ell}$ by $(\ell-k)^{-1}$. Multiplying by $U(z)$ gives back a linear combination of $e_{\nu}$ with $\nu\in (3\Gamma^*+i)$. Finally, applying $(D(0)-k)^{-1}$ multiplies the coefficient of $e_{\nu}$ by $(\nu-k)^{-1}$. This means that, using the induction hypothesis \eqref{eq:Induc},
$$ A_k^{\ell+1}e_{3\gamma+i}=\sum_{\nu\in F}R_{\nu+3\gamma}(k)\sum_{\eta\in L}\sum_{\beta\in L}\frac{a_{\eta}}{k-(\nu+\beta-\eta+3\gamma)}\frac{a_{\beta}}{k-(\nu+\beta+3\gamma)}e_{\nu+\beta-\eta+3\gamma},$$
where $L\subset 3\Gamma^*+i$ is a finite subset that depends only on $U$ and $a_{\bullet}$ are constants. Thus, it is clear from this formula that the induction carries on to $\ell+1$. This concludes the proof of the Lemma.
 \end{proof}

\section{Trace computations}
We prove the following result.

\begin{theo}
\label{traceresult}
Let  $A_k:L^2_{(1,1)}\to L^2_{(1,1)}$ be a meromorphic family of Hilbert-Schmidt operators defined for $k\notin (3\Gamma^*-i)\cup (3\Gamma^*+i)$. We suppose that $A_k$ satisfies the three properties stated in Lemma \ref{properties}. Then one has, for any $\ell\geq 2$,
\begin{equation}
\label{eq:taul}
\begin{split}
\tau_{\ell} & =\frac{2i\pi\omega}{3\sqrt 3}\sum_{n\in \mathbb Z} n\left[\sum_{m\in \mathbb Z}\mathrm{Res}\big(\langle A_k^{\ell}e_{i},e_{i}\rangle_{L^2},\sqrt 3(m\omega^2-n\omega)+i\big)\right.
\\  & \ \ \ \ \ \ \ \ \ \ \ \ \left.+\sum_{m\in \mathbb Z}\mathrm{Res}\big(\langle A_k^{\ell}e_{i},e_{i}\rangle_{L^2},\sqrt 3(m\omega^2-n\omega)+2i\big)\right], \end{split} \end{equation} 
where the all infinite sums are in fact finite.
\end{theo}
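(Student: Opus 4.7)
The plan is to reduce the trace to a single-variable lattice sum using the translation equivariance in Lemma \ref{properties}, and then extract the formula by a two-dimensional residue identity.

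First, I would expand the trace in the Fourier basis $\{e_{3\gamma+i}\}_{\gamma \in \Gamma^*}$ of $L^2_{(1,1)}$ and apply property 3 to each diagonal matrix element. This yields
\[
\tau_\ell \;=\; \sum_{\gamma\in\Gamma^*} \langle A_k^\ell e_{3\gamma+i}, e_{3\gamma+i}\rangle_{L^2}
\;=\; \sum_{\gamma\in\Gamma^*} f(k-3\gamma), \qquad f(k) := \langle A_k^\ell e_i, e_i\rangle_{L^2}.
\]
By property 2, $f$ is a finite sum of rational fractions of degree $-2\ell$ with poles confined to $3\Gamma^* + \{i,2i\}$, so the lattice sum converges absolutely for $\ell \geq 2$; property 1 asserts that this sum is constant in $k$. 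Parameterizing $3\gamma = \sqrt{3}(m\omega^2 - n\omega)$ with $(m,n)\in\mathbb{Z}^2$, the identity becomes
\[
\tau_\ell \;=\; \sum_{m,n\in\mathbb{Z}} f\bigl(k - \sqrt{3}(m\omega^2 - n\omega)\bigr).
\]

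The second step is to evaluate this lattice sum by a complex-analytic identity. The key observation is that the lattice index $n$ is the restriction to $3\Gamma^*$ of the real-linear function
\[
n(k) \;=\; \tfrac{\Re(k)}{\sqrt{3}} - \tfrac{\Im(k)}{3} \;=\; \tfrac{i}{3}\bigl(\bar{k}\omega^2 - k\omega\bigr),
\]
as can be checked from $n(\sqrt{3}(m\omega^2 - n\omega)) = n$. I would then test the $k$-independent identity against $n(k)$ and integrate against a fundamental parallelogram $P$ of covolume $|P| = 3\sqrt{3}/2$. Tiling $\mathbb{C}$ by lattice translates of $P$ and using the linearity $n(k+3\gamma) = n(k) + n(3\gamma)$ expresses the trace through $\int_\mathbb{C} n(k)\,f(k)\,d^2 k$. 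The latter is computed via the Stokes-type identity $\int_\mathbb{C} g\,d^2k = -\pi\sum_\kappa \bar\kappa\,\mathrm{Res}(g,\kappa)$ applied with $g = n f$: since $n$ is smooth, one has $\mathrm{Res}(nf,\kappa) = n(\kappa)\,\mathrm{Res}(f,\kappa)$, and $n(\kappa)$ returns the lattice coordinate $n$ on each of the $+i$- and $+2i$-families of poles. Combining with the two vanishing identities $\sum_\kappa \mathrm{Res}(f,\kappa) = 0$ and $\sum_\kappa \kappa\,\mathrm{Res}(f,\kappa) = 0$ (consequences of the degree-$-2\ell$ decay of $f$ for $\ell\geq 2$) cancels the spurious $m$-weighted contributions and reduces the constants $\tfrac{1}{3}, \tfrac{2}{3}$ coming from the offsets $+i, +2i$, so that after accounting for $|P|$ and the complex-linear form of $n$ one obtains the prefactor $\tfrac{2i\pi\omega}{3\sqrt{3}}$ and the claimed formula.

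The main obstacle is making the Stokes/Green computation rigorous when $f$ carries poles of order up to $2\ell$, so that neither $f$ nor $nf$ lies in $L^1(\mathbb{C})$. I would handle this by working on $\mathbb{C}$ minus small discs around the poles and carefully passing to the limit: the boundary contributions of higher-order poles vanish by angular integration, so only simple-pole residues contribute to the boundary terms in the Stokes identity. A secondary subtlety is that when $f$ has poles of order $\geq 2$ the identity $\sum_\kappa \kappa\,\mathrm{Res}(f,\kappa)=0$ acquires corrections from the $c_{-2}$-coefficients, so one must verify that these corrections are consistently absorbed into the cancellation of the $m$-weighted contributions; this is the most delicate bookkeeping step of the proof.
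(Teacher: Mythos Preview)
Your first reduction, writing $\tau_\ell = \sum_{\gamma\in\Gamma^*} f(k-3\gamma)$ with $f(k) := \langle A_k^\ell e_i, e_i\rangle_{L^2}$, matches the paper. After that your two-dimensional Stokes approach has a genuine gap. The identity $\int_{\mathbb C} g\, d^2k = -\pi\sum_\kappa\bar\kappa\,\mathrm{Res}(g,\kappa)$ is valid for \emph{meromorphic} $g$ with enough decay, but $g = nf$ is not meromorphic because your coordinate function $n(k) = \tfrac{i}{3}(\bar k\omega^2 - k\omega)$ involves $\bar k$; a direct Stokes computation for, say, $g = \bar k f$ already produces a different constant than the naive rule $\mathrm{Res}(\bar k f,\kappa)=\bar\kappa\,\mathrm{Res}(f,\kappa)$ would predict. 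More importantly, your cancellation step fails: you claim that $\sum_\kappa \kappa\,\mathrm{Res}(f,\kappa)=0$ kills the ``$m$-weighted contributions'', but the lattice coordinate $m$, exactly like $n$, is real-linear in $\kappa$ and therefore carries a $\bar\kappa$-component on which neither holomorphic vanishing identity $\sum\mathrm{Res}=0$, $\sum\kappa\,\mathrm{Res}=0$ has any bearing. The roles of $m$ and $n$ are completely symmetric in this respect, so no algebraic identity in $\kappa$ alone can isolate $\sum_\kappa n\,\mathrm{Res}(f,\kappa)$ from $\sum_\kappa m\,\mathrm{Res}(f,\kappa)$.

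The paper avoids this entirely by never producing an $m$-weight. It first averages $k$ over a segment in the $\omega^2$-direction, which absorbs the $m$-sum into a line integral and yields the one-dimensional identity
\[
3\tau_\ell \;=\; \sum_{n\in\mathbb Z}\int_{\mathbb R} f\bigl(t\omega^2/\sqrt 3 - \sqrt 3\, n\omega\bigr)\,dt .
\]
Abel (telescoping) summation in $n$ then introduces the weight $n$ directly, and each difference $I_n - I_{n+1}$ is recognised as the integral of the meromorphic function $f$ around a closed parallelogram contour, to which the ordinary one-variable residue theorem applies. No area integral is needed, and the non-integrability of $f$ near higher-order poles --- the issue you flag as ``the most delicate bookkeeping step'' --- simply never arises.
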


\begin{proof}
We want to give a semi-explicit formula for $\tau_{\ell}$ in terms of the residue of the rational fraction (second point in Lemma \ref{properties})
$$k\in \mathbb C\setminus \big(3\Gamma^*-i\big)\cup\big(3\Gamma^*+i\big)\mapsto   \langle A_k^{\ell}e_{i},e_{i}\rangle_{L^2}.$$
We first start by writing, using that $A_k^{\ell}$ is trace-class for $\ell\geq 2$,  
$$\tau_{\ell}=\sum_{\gamma\in \Gamma} \langle A_k^{\ell}e_{3\gamma+i},e_{3\gamma+i}\rangle_{L^2}. $$
We start with the relation which follows directly from \eqref{eq:constant},
$$3\tau_{\ell}=\int_0^3\tr\big(A^{\ell}_{t\omega^2/\sqrt 3}\big)dt=\int_0^3\sum_{n\in \mathbb Z^2}\langle A_{t\omega^2/\sqrt 3}^{\ell}e_{3\gamma_{n}+i},e_{3\gamma_{n}+i}\rangle_{L^2}dt. $$
Here, we wrote $\gamma_{n}:=\frac{1}{\sqrt 3}(n_1\omega^2-n_2\omega)\in \Gamma^*$.
We now use the third property stated in Lemma \ref{properties} to write
$$3\tau_{\ell}=\int_0^3\sum_{n\in \mathbb Z^2} \langle A_{t\omega^2/\sqrt 3-3\gamma_{n}}^{\ell}e_{i},e_{i}\rangle_{L^2}dt. $$
The second property  in  Lemma \ref{properties}  implies that 
\begin{equation}
\label{eq:decA} \langle A_k^{\ell}e_{i},e_{i}\rangle_{L^2} =  {\mathcal O}\big(k^{-2\ell}\big). 
\end{equation}
Since we assume that $\ell\geq 2$, this justifies the exchange of integration and summation such that $$3\tau_{\ell}=\sum_{n\in\mathbb Z^2} \int_0^3\langle A^{\ell}_{(t+3n_1)\omega^2/\sqrt 3-3n_2 \omega/\sqrt 3}e_{i},e_{i}\rangle_{L^2}dt.$$
We make the change of variable $s=t+3n_1$ and sum in $n_1$ to get
\begin{equation}
\label{eq:int}
3\tau_{\ell}=\sum_{n\in\mathbb Z}\int_{\mathbb R}\langle A^{\ell}_{t\omega^2/\sqrt 3-3n \omega/\sqrt 3}e_{i},e_{i}\rangle_{L^2}dt. 
\end{equation}
We now consider 
$$\int_{\mathbb R}\langle A^{\ell}_{t\omega^2/\sqrt 3-3n \omega/\sqrt 3}e_{i},e_{i}\rangle_{L^2} \ dt-\int_{\mathbb R}\langle A^{\ell}_{t\omega^2/\sqrt 3-3(n+1) \omega/\sqrt 3}e_{i},e_{i}\rangle_{L^2}dt, \ \ 
n \in \mathbb Z.  $$
This is equal to the limit of the integral over a parallelogram $\Gamma_{n,R}$ with 
sides $\frac{1}{\sqrt 3}[-R\omega^2-3n\omega,R\omega^2-3n\omega]$, $\frac{1}{\sqrt 3}[R\omega^2-3n \omega,R\omega^2-3(n+1) \omega]$, $\frac{1}{\sqrt 3}[R\omega^2-3(n+1) \omega,-R\omega^2-3(n+1) \omega]$ and $\frac{1}{\sqrt 3}[-R\omega^2-3(n+1) \omega,-R\omega^2-3n \omega]$. Here, we used \eqref{eq:decA}  to prove that the integral over the small parallel sides tends to $0$. In particular, because there is only a finite number of poles, we see that, for $|n|$ large enough, one has
$$\int_{\mathbb R}\langle A^{\ell}_{t\omega^2/\sqrt 3-3n \omega/\sqrt 3}e_{i},e_{i}\rangle_{L^2}dt-\int_{\mathbb R}\langle A^{\ell}_{t\omega^2/\sqrt 3-3(n+1) \omega/\sqrt 3}e_{i},e_{i}\rangle_{L^2}dt=0. $$
Using formula \eqref{eq:int} as well as a partial summation, this allows us to rewrite the full trace as a telescopic sum 
\begin{equation}
\label{eq:telescope} 3\tau_{\ell}=\sum_{n\in\mathbb Z}n\left[ \int_{\mathbb R}\langle A^{\ell}_{t\omega^2/\sqrt 3-3n \omega/\sqrt 3}e_{i},e_{i}\rangle_{L^2} \ dt-\int_{\mathbb R}\langle A^{\ell}_{t\omega^2/\sqrt 3-3(n+1) \omega/\sqrt 3}e_{i},e_{i}\rangle_{L^2}dt\right]. \end{equation}
The residue theorem shows that for $n\in \mathbb Z$ and $R$ large enough,
\begin{equation}
\label{eq:omega}  \begin{split} \int_{\Gamma_{n,R}}\langle A^{\ell}_{z}e_{i},e_{i}\rangle_{L^2}dz & =\int_{\mathbb R}\langle A^{\ell}_{t\omega^2/\sqrt 3-3n \omega/\sqrt 3}e_{i},e_{i}\rangle_{L^2}\frac{\omega^2}{\sqrt 3}dt \\
& \ \ \ \ \ - \int_{\mathbb R}\langle A^{\ell}_{t\omega^2/\sqrt 3-3(n+1) \omega/\sqrt 3}e_{i},e_{i}\rangle_{L^2})\frac{\omega^2}{\sqrt 3}dt. \end{split} \end{equation}
Applying the residue theorem and using \eqref{eq:telescope} gives \eqref{eq:taul}.
\end{proof}

The consequences of this formula are summarized in the following theorem:
\begin{theo}
\label{rat}
Consider a potential $U\in C^{\infty}(\mathbb C/\Gamma; \mathbb C)$ satisfying the first two symmetries of \eqref{eq:potential} with finitely many non-zero Fourier modes $c_{n} \in \mathbb Q(\omega/\sqrt 3)$ appearing in the decomposition \eqref{eq:generalform}. Then for any $\ell \geq 2$, one has $\tau_{\ell}\in \pi\mathbb Q(\omega/\sqrt 3) $. If $U$ also has the third symmetry of \eqref{eq:potential} then the traces are real and thus $\tau_{\ell}\in  \pi \mathbb Q/\sqrt 3$. In particular, for all potentials satisfying all three symmetries in \eqref{eq:potential}, including $U = U_0$ defined in \eqref{eq:defU}, one has
$$\forall \ell\geq 2,\quad \tr(T_k^{2\ell})=\sum_{\alpha\in \mathcal A(U)}\alpha^{-2\ell}=\frac{\pi}{\sqrt 3}q_{\ell},\;\;\; q_{\ell}\in \mathbb Q, $$
where $\mathcal A(U)$ is the set of magic angles counting multiplicity for a potential $U.$
\end{theo}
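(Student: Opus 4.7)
By Theorem \ref{traceresult}, $\tau_\ell$ is a finite $\mathbb Q$-linear combination of residues of $k\mapsto \langle A_k^\ell e_i,e_i\rangle_{L^2}$ at poles of the form $\sqrt 3(m\omega^2-n\omega)+i$ and $\sqrt 3(m\omega^2-n\omega)+2i$, scaled by the universal prefactor $2i\pi\omega/(3\sqrt 3)$. All these poles lie in $\mathbb Q(\omega/\sqrt 3)$, so the plan is to refine the inductive step in the proof of Lemma \ref{properties} to check that the residues themselves live in $\mathbb Q(\omega/\sqrt 3)$.

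Writing $A_k^\ell e_i = \sum_\nu R_\nu(k) e_\nu$, I would observe that each application of $(2D_{\bar z}-k)^{-1}$ multiplies a mode $e_\nu$ by $1/(\nu-k)$ with $\nu\in \Gamma^*\subset \mathbb Q(\omega/\sqrt 3)$, and each multiplication by $U(\pm z)$ shifts indices and reweights by the Fourier coefficients $c_n$. Under the hypothesis $c_n\in \mathbb Q(\omega/\sqrt 3)$, a direct induction then places $R_\nu$ in $\mathbb Q(\omega/\sqrt 3)(k)$ with poles contained in $\mathbb Q(\omega/\sqrt 3)$, so the residues at those poles lie in $\mathbb Q(\omega/\sqrt 3)$. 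The prefactor itself is in $\pi \mathbb Q(\omega/\sqrt 3)$: the identity $(\omega/\sqrt 3)^2 = -(1+\omega)/3$ places $\omega$ and hence $\sqrt 3$ in the field, and $1+2\omega = i\sqrt 3$ then places $i$ in the field. This gives the first assertion $\tau_\ell\in \pi\mathbb Q(\omega/\sqrt 3)$.

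Under the third symmetry, the relation $\bar c_n = c_{(-n_2)(-n_1)}$ from Proposition \ref{prop:1} provides an antilinear intertwiner for $A_k$ forcing $\tau_\ell\in\mathbb R$, and $\pi\mathbb Q(\omega/\sqrt 3)\cap\mathbb R=\pi\mathbb Q(\sqrt 3)$. The refinement to $\pi\mathbb Q/\sqrt 3$ is the main obstacle: I would factor the prefactor as $(\pi/\sqrt 3)\cdot (2i\omega/3)$ and pair residues at poles exchanged by complex conjugation (together with the identification $2i\equiv -i \pmod{3\Gamma^*}$), which forces the total residue sum to take the form $r\, e^{-i\pi/6}$ with $r\in\mathbb R$; a short computation then shows $(2i\omega/3)\cdot r\, e^{-i\pi/6}\in\mathbb Q$, killing the spurious $\sqrt 3$-component of $\pi\mathbb Q(\sqrt 3)$. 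Finally, the ``in particular'' statement for $U_0$ follows at once: after normalization, the Fourier coefficients of $U_0$ from \eqref{eq:defU} lie in $\mathbb Q(\omega)\subset \mathbb Q(\omega/\sqrt 3)$, $U_0$ satisfies all three symmetries of \eqref{eq:potential}, and Lidskii's theorem applied to the trace-class operator $T_k^{2\ell}$ identifies $\tr T_k^{2\ell} = \sum_{\alpha\in \mathcal A(U_0)}\alpha^{-2\ell}$ since the eigenvalues of $T_k^2$ are the $\alpha^{-2}$ for $\alpha\in\mathcal A(U_0)$.
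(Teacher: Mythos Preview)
Your derivation of $\tau_\ell\in\pi\mathbb Q(\omega/\sqrt 3)$ coincides with the paper's: the induction in Lemma~\ref{properties} together with the hypothesis $c_n\in\mathbb Q(\omega/\sqrt 3)$ shows that $k\mapsto\langle A_k^\ell e_i,e_i\rangle$ lies in $\mathbb Q(\omega/\sqrt 3)(k)$, hence its partial-fraction data and residues belong to $\mathbb Q(\omega/\sqrt 3)$, and the prefactor in \eqref{eq:taul} contributes exactly the factor~$\pi$.

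On the passage from $\pi\mathbb Q(\omega/\sqrt 3)\cap\mathbb R$ to $\frac{\pi}{\sqrt 3}\mathbb Q$ you are in fact more scrupulous than the paper, which writes this implication down with no further comment. You correctly observe that $\mathbb Q(\omega/\sqrt 3)=\mathbb Q(i,\sqrt 3)$, so the real part is $\pi\mathbb Q(\sqrt 3)\supsetneq\frac{\pi}{\sqrt 3}\mathbb Q$ and an additional argument is needed. Your proposed remedy, however, does not close the gap. Writing the weighted residue sum as $S=r e^{-i\pi/6}$ with $r\in\mathbb R$ is an automatic consequence of $\tau_\ell\in\mathbb R$ (indeed $2i\omega\cdot e^{-i\pi/6}=-2$), and combined with $S\in\mathbb Q(i,\sqrt 3)$ it yields only $r\in\mathbb Q(\sqrt 3)$, so that $\tau_\ell=-\tfrac{2\pi}{3\sqrt 3}\,r$ lands back in $\pi\mathbb Q(\sqrt 3)$. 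Killing the $\mathbb Q$-component of $\mathbb Q(\sqrt 3)$ requires a constraint independent of mere reality --- for instance, showing that in suitable coordinates the residues already lie in the smaller field $\mathbb Q(\omega)$, as indeed happens for $U_0$ via Theorem~\ref{theo:technical_tristan}, where no $\sqrt 3$ enters the rational functions $f_\pi$ and the same reality computation then forces $T\in\mathbb Q(\omega)$, $\bar T=-\omega^2 T$, hence $T\in\mathbb Q\cdot(\omega-1)$ and $\tau_\ell\in\frac{\pi}{\sqrt 3}\mathbb Q$. Neither your conjugation-pairing sketch nor the paper's one-line deduction supplies such a second constraint in the stated generality $c_n\in\mathbb Q(\omega/\sqrt 3)$.
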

\begin{proof}
Under the hypothesis of the corollary, the function $k\mapsto \langle A_k^{\ell}e_{i},e_{i}\rangle_{L^2}$ is a rational fraction with coefficients in $\mathbb Q(\omega/\sqrt 3)$. This ring is actually a field as $\omega/\sqrt 3$ is algebraic on $\mathbb Q$. Now, taking partial fraction expansion of $k\mapsto \langle A_k^{\ell}e_{i},e_{i}\rangle_{L^2}$ in $\mathbb Q(\omega/\sqrt 3)(X)$ (the space of rational fractions with coefficients living in $\mathbb Q(\omega/\sqrt 3)(X)$) gives coefficients in $\mathbb Q(\omega/\sqrt 3)$. In particular, the residues of $k\mapsto \langle A_k^{\ell}e_{i},e_{i}\rangle_{L^2}$ live in the field $\mathbb Q(\omega/\sqrt 3)(X)$. But the uniqueness of the partial fraction expansion now gives that these are also the residues of $k\mapsto \langle A_k^{\ell}e_{i},e_{i}\rangle_{L^2}$ in $\mathbb C(X)$. Using the trace formula stated in Theorem \ref{traceresult}, this yields 
$$\forall \ell\geq 2,\quad \tau_{\ell}\in  \pi \mathbb Q(\omega/\sqrt 3). $$
If we add the last symmetry of \eqref{eq:potential}, the trace is real so that
$$ \forall \ell\geq 2,\quad \tau_{\ell}\in \pi\mathbb Q(\omega/\sqrt 3)\cap \mathbb R\Rightarrow \tau_{\ell}=\frac{\pi}{\sqrt 3}q_{\ell},\;\;\; q_{\ell}\in \mathbb Q. $$
\end{proof}
This rationality condition suffices to prove that there is an infinite number of magic angles as long as there exists at least one magic angle.
\begin{theo}
\label{sec:races_to_infinity}
Under the assumptions and with the same notation as in Theorem \ref{rat} one has the implication
$$|\mathcal A(U)|>0\Rightarrow |\mathcal A(U)|=+\infty.$$
In particular, the set of magic angles for our canonical potential $U_0$ defined in \eqref{eq:defU} is infinite.
\\
Let $N\geq 0$, for a tuple $a=(a_{n})_{\{n; \Vert n \Vert_{\infty} \leq N\}}$, define $U_a$ to be the potential defined by \eqref{eq:generalform}. Then the above implication holds for a generic $($in the sense of Baire$)$ set of coefficients $a=(a_{n})_{\{n;\Vert n \Vert_{\infty} \le N\}}\in \mathbb C^{(2N+1)^2}$ that contains $(\mathbb Q(\omega/\sqrt{3}))^{(2N+1)^2}.$
\end{theo}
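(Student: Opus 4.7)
The plan is to prove the contrapositive of the main implication: assuming $\mathcal A(U)$ is nonempty and finite with $N$ elements (counted with multiplicity), I derive a contradiction from Theorem \ref{rat} together with the transcendence of $\pi$. Setting $\beta_i = \alpha_i^{-2}$ for $\alpha_i \in \mathcal A(U)$, the power sums $\tr(T_k^{2\ell}) = \sum_i \beta_i^{\ell}$ are the Taylor coefficients of
\[
F(z) := \sum_{\ell \geq 2} \tr(T_k^{2\ell})\, z^{\ell} = \sum_{i=1}^N \frac{\beta_i^2\, z^2}{1 - \beta_i z},
\]
which is a rational function over $\mathbb C$ with a pole at each $z = \alpha_i^2$ and residue $-m_i \alpha_i^2 \neq 0$ there, $m_i$ being the multiplicity.

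Theorem \ref{rat} writes $\tr(T_k^{2\ell}) = (\pi/\sqrt 3)\, q_\ell$ with $q_\ell \in \mathbb Q$, hence $F = (\pi/\sqrt 3)\, G$ for the formal power series $G(z) := \sum_{\ell \geq 2} q_\ell z^{\ell}$ with rational coefficients. Since $G$ differs from $F$ by a nonzero scalar it is also rational over $\mathbb C$, and any rational formal power series whose coefficients lie in a subfield $K \subset \mathbb C$ must itself belong to $K(z)$ (extract the minimal linear recurrence and solve for its coefficients on a Hankel-type system by Cramer's rule). Therefore $G \in \mathbb Q(z)$, so each pole of $G$ (equivalently of $F$) is algebraic over $\mathbb Q$ and each residue of $G$ at such a pole lies in $\overline{\mathbb Q}$. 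Comparing residues at $z = \alpha_j^2$,
\[
-m_j \alpha_j^2 = \frac{\pi}{\sqrt 3}\, \mathrm{Res}_{z = \alpha_j^2} G(z),
\]
the left-hand side is a nonzero algebraic number while the right-hand side is $\pi/\sqrt 3$ times a nonzero algebraic number; solving gives $\pi \in \overline{\mathbb Q}$, contradicting Lindemann's theorem. The assertion for $U = U_0$ is immediate because $\tr(T_k^4) = 8\pi/\sqrt 3 \neq 0$ guarantees at least one magic angle.

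For the Baire-generic claim, let $F_a \subset \mathbb C$ denote the field generated over $\mathbb Q$ by $\omega/\sqrt 3$ and the coordinates $a_n, \bar a_n$ of the tuple $a$. The explicit residue formula \eqref{eq:taul} exhibits each trace as $\tau_\ell(a) = (2i\pi\omega/(3\sqrt 3))\, g_\ell(a)$ with $g_\ell$ a polynomial in $(a, \bar a)$ with coefficients in $\mathbb Q(\omega/\sqrt 3)$, so $\tau_\ell(a) \in \pi\, F_a$. Whenever $\pi \notin \overline{F_a}$ the argument of the previous two paragraphs applies verbatim with $\mathbb Q$ replaced by $F_a$ and yields the implication; this is the case for every $a \in (\mathbb Q(\omega/\sqrt 3))^{(2N+1)^2}$, for which $F_a$ is a number field and Lindemann gives the transcendence of $\pi$ over $F_a$. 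The complementary set $\{a : \pi \in \overline{F_a}\}$ is the countable union over nonzero $P \in \mathbb Q(\omega/\sqrt 3)[X, Y, Z]$ of the loci $\{a : P(a, \bar a, \pi) = 0\}$. Since $\pi$ is transcendental over $\mathbb Q(\omega/\sqrt 3)$ the polynomial $P(X, Y, \pi)$ is nonzero, and a nonzero polynomial that vanishes on the totally real graph $\{Y = \bar X\}$ would extend by analyticity to the zero polynomial; hence each locus is a proper real-algebraic subvariety of $\mathbb C^{(2N+1)^2}$, nowhere dense, and the union is meager. The good set therefore contains a dense $G_\delta$, as required. The main anticipated obstacle is precisely this last bookkeeping step, which reduces cleanly to Lindemann's transcendence of $\pi$ over the number field $\mathbb Q(\omega/\sqrt 3)$ together with the identity principle for holomorphic polynomials on totally real submanifolds.
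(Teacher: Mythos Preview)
Your argument is correct and takes a genuinely different route from the paper's, both for the main implication and for the genericity statement.

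For the implication $|\mathcal A(U)|>0\Rightarrow|\mathcal A(U)|=\infty$, the paper works with the elementary symmetric functions $e_n(\lambda_1,\dots,\lambda_N)$ of the (finitely many, by hypothesis) eigenvalues of $A_k^2$ and invokes Newton's identities: each $e_n$ becomes a polynomial in $\pi$ with coefficients in $\mathbb Q(\omega/\sqrt3)$, and the vanishing of $e_n$ for $n>N$ forces a nontrivial algebraic relation for $\pi$ unless the spectrum is empty. Your generating-function argument encodes the same information more transparently: the rationality of $F(z)=\sum_{\ell\ge2}\tr(T_k^{2\ell})z^\ell$ together with the subfield lemma (a rational power series with coefficients in $K$ lies in $K(z)$) pins down the poles as algebraic numbers, and the residue comparison then isolates $\pi$ as a ratio of two nonzero algebraic numbers. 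This avoids the combinatorics of Newton's identities and the somewhat delicate step in the paper of tracking the top power of $\pi$; conversely, the paper's approach stays closer to the regularized determinant framework used elsewhere. One small point: in the full generality of Theorem~\ref{rat} (only the first two symmetries), the traces lie in $\pi\,\mathbb Q(\omega/\sqrt3)$ rather than $\pi\mathbb Q/\sqrt3$, so your subfield $K$ should be $\mathbb Q(\omega/\sqrt3)$ throughout; your argument is unchanged.

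For the Baire-generic statement the two proofs diverge more substantially. The paper argues analytically: once some $a$ has $\mathcal A(U_a)\neq\emptyset$, perturb to nearby coefficients $q\in(\mathbb Q(\omega/\sqrt3))^{(2N+1)^2}$ where $|\mathcal A(U_q)|=\infty$ by the first part, then use continuity of the eigenvalues of $T_k$ in the potential to conclude that $\{b:|\mathcal A(U_b)|\ge m\}$ is open and dense. You instead observe directly from the residue formula~\eqref{eq:taul} that $\tau_\ell(a)\in\pi\,F_a$ for the field $F_a$ generated over $\mathbb Q(\omega/\sqrt3)$ by the Fourier coefficients (in fact $\bar a$ is not needed, since $U(-z)$ has the same coefficients as $U(z)$, but enlarging $F_a$ is harmless), and rerun the rational-generating-function argument over $F_a$; the bad set $\{a:\pi\in\overline{F_a}\}$ is then shown to be a countable union of proper real-algebraic varieties. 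Your approach is purely algebraic and does not invoke perturbation theory for the eigenvalues of $T_k$, while the paper's argument is shorter once that continuity is taken for granted. Both yield the stated conclusion.
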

\begin{proof}
We start by observing that since $\pi$ is transcendental on $\mathbb Q$, it is also transcendental in $\mathbb Q(\omega/\sqrt{3}).$
Now, assume by contradiction, that there exist only finitely many eigenvalues $\lambda_i \in \mathbb C$ for $i=1,..,N$ of $A_{k}^2$. Then we define the $n$-th symmetric polynomial
\[ e_n (\lambda_1 , \ldots , \lambda_N )=\sum_{1\le  j_1 < j_2 < \cdots < j_n \le N} \lambda_{j_1} \dotsm \lambda_{j_n}.\]
Newton identities show that this polynomial can be expressed as 
\begin{equation}
\label{eq:Newton} e_n(\lambda_1 , \ldots , \lambda_N ) = (-1)^n  \sum_{m_1 + 2m_2 + \cdots + nm_n = n \atop m_1 \ge 0, \ldots, m_n \ge 0} \prod_{i=1}^n \frac{(-\tr A_{k}^{2i})^{m_i}}{m_i ! i^{m_i}}\end{equation}
where $e_n=0$ for $n>N.$
Theorem \ref{theo:traces} shows that
 $$\prod_{i=1}^n (\tr A_{k}^{{ 2i}})^{m_i} \in \mathbb Q \left(\frac{\omega}{\sqrt{3}}\right)\pi^{m_1 \cdots m_n}.$$
The power $m_1\cdots m_n$ from sequences allowed in \eqref{eq:Newton} is maximized by the unique choice $m=(n,0,\hdots,0)$.  The Newton identities for $n>N$ then imply that the transcendental number $\pi$ is a root of a polynomial with coefficients in $\mathbb Q \left({\omega}/{\sqrt{3}}\right).$ But then all these coefficients vanish, this is equivalent to the fact that the spectrum is empty (because of the determinant function, see \eqref{eq:Fredholm_det}). For our particular choice of potential $U_0$, the fact that $ \tr A^2_{k}=0$ contradicts \cite[Theorem $3$]{beta} so the set of magic angles is non empty, and thus infinite.
\\
Now, let $a=(a_{n})_{\{n;\Vert n \Vert_{\infty} \le N\}} \in \mathbb C^{(2N+1)^2}$ and assume that $\mathcal A(U_a) \neq \emptyset$. Then, we can find an open neighbourhood of $ a $,  $\Omega_a \ni a$, such that for coefficients $b=(b_{n})_{\{n;\Vert n \Vert_{\infty} \le N\}} \in \Omega_a$ we have $\mathcal A(U_b) \neq \emptyset$. 
Take  $q=(q_{n})_{\{n;\Vert n \Vert_{\infty} \le N\}} \in (\mathbb Q(\omega/\sqrt{3}))^{(2N+1)^2}\cap \Omega_a$ 
for which we then have $\vert \mathcal A(U_q) \vert = \infty.$ 
Continuity of eigenvalues of $T_k $ as the potential $ U $ changes shows that the 
$ V_{m,a} := \{ b \in \Omega_a :  | \mathcal A(U_{b}) | \geq m \}  $
is open and dense in $\Omega_a$. Hence, the set 
coefficients for which $ 0 < | \mathcal A_b | < \infty $ is given by $\bigcup_{m \in \mathbb N} \bigcup_{q \in  ( \mathbb Q+i\mathbb Q)^{2N+1}} \Omega_q \setminus V_{m,q}$. It is then meagre and does not contain $(\mathbb Q(\omega/\sqrt{3}))^{(2N+1)^2}.$
\end{proof}

\section{Fredholm determinants and the first magic angle}
In this section, we explain how to compute the first few traces from our formula and show the existence of a simple real magic angle, i.e. prove Theorem \ref{theo:existence}.
From now on, our choice of potential is given by $U=U_0$ defined in \ref{eq:defU}.
Here we recall some facts from \cite{beta,bhz2} needed in this paper.
\subsection{Fourier coordinates}
For our numerics, it is convenient to use rectangular coordinates $z=2i(\omega y_1+\omega y_2)$, see \cite[\S 3.3]{beta} for details. In these coordinates, we may introduce
\begin{equation}
\label{eq:Dkal} \begin{gathered}    
\mathcal D_{k } :={ \omega^2  
(D_{y_1} + k_1) - \omega (D_{y_2} + k_2 ) }, \\
\mathcal V ( y ) := \sqrt 3 ( e^{ -i ( y_1 + y_2 ) } 
+ \omega e^{ i (  2 y_1 - y_2 ) } + \omega^2 e^{ i ( - y_1 + 2 y_2 )}  ),
\end{gathered} 
\end{equation}
with {\em periodic} periodic boundary conditions (for $ y \mapsto
y + 2 \pi \mathbf n  $, $ \mathbf n  \in  \ZZ^2 $). {In the following, we shall write $\mathcal V_{\pm}(y):=\mathcal V(\pm y).$}
The operator $\mathcal A_k$, defined in \eqref{eq:Ak}, reads in the new coordinates
\[ \begin{gathered} 
\mathcal D_k^{-1} \mathcal V_+ \mathcal D_{k}^{-1} \mathcal V_- :
L^2 (\mathbb C/ 2 \pi (\mathbb Z+i\mathbb Z)  ; \mathbb C ) \to L^2 ( \mathbb C/2 \pi (\mathbb Z+i\mathbb Z) ; \mathbb C ).\end{gathered} \]
On the Fourier transform side
we introduce the equivalent of operators \eqref{eq:Dkal}
\begin{equation} \begin{gathered}   \widehat{ \mathcal D}_{k } :={ \omega^2  
(D + k_1) - \omega (D + k_2 ) }, \text{ with } D= \operatorname{diag}(\ell)_{\ell \in \ZZ} \\
\widehat{\mathcal V}_{\pm} ( y ) :=\sqrt 3\left(  J^{\pm} \otimes J^{\pm} 
+ \omega J^{\mp 2}\otimes J^{\pm} + \omega^2 J^{\pm} \otimes J^{\mp 2}\right),
\end{gathered} 
\end{equation}
where $J$ is the right-shift $J((a_n)_n) = (a_{n+1})_n $ -- see 
\cite[(3.17)]{beta}. The spaces  
$ L^2_{(p_1,p_2)}  ( \mathbb C/\Gamma ; \mathbb C ) $, introduced in \eqref{eq:spaces}, correspond to 
\[ 
\begin{gathered} \ell_{(p_1 ,p_2)}^2:=\{f\in \ell^2(\mathbb{Z}^2) : \forall \, n \notin \left(3\mathbb{Z}+p_1\right)\times\left(3\mathbb{Z}+p_2\right),\:\: f_{n}=0\}.
 \end{gathered} \]
As in \cite[\S 3.3]{beta}, we introduce auxiliary operators
$J^{p,q}:=J^p\otimes J^q,\:\: p,q\in \mathbb{Z}$. 
For a diagonal matrix $\Lambda=(\Lambda_{i,j})_{i,j\in \mathbb{Z}}$ acting on $\ell^2(\mathbb{Z}^2)$, we define a new diagonal matrix
$$\Lambda_{p,q}:=(\Lambda_{i+p,j+q})_{i,j\in \mathbb{Z}} .$$
We recall the following properties \cite[(3.24)]{beta}
\begin{equation}
\label{eq:identity}
J^{p,q}\Lambda J^{p',q'}=\Lambda_{p,q}J^{p+p',q+q'}=J^{p+p',q+q'}\Lambda_{-p',-q'}. 
\end{equation}
Denoting the inverse of $\widehat{ \mathcal D}_k^{-1}$ by
$$\Lambda=\Lambda_k:=\widehat{ \mathcal D}_k^{-1}, \ \ \  \Lambda_{m,n}=\frac{1}{\omega^2(m+k_1)-\omega(n+k_2)}, \ \ \  (k_1,k_2)\notin \mathbb{Z}^2, $$
we see that $\mathcal A_k$ reads in the new Fourier coordinates
$$\frac 1 3\widehat{\mathcal{A}}_k=\Lambda\Lambda_{1,1}+\omega\Lambda\Lambda_{1,-2}+\omega^2\Lambda\Lambda_{-2,1}+\omega\Lambda\Lambda_{1,1}J^{3,0}+\omega^2\Lambda\Lambda_{1,1}J^{0,3} $$
$$+\omega\Lambda\Lambda_{-2,1}J^{-3,0}+\omega^2\Lambda\Lambda_{1,-2}J^{0,-3}+\Lambda\Lambda_{-2,1}J^{-3,3}+\Lambda\Lambda_{1,-2}J^{3,-3} $$
with $\widehat{A}_k$ the analogous restriction to $\ell^2_{(1,1)}.$

\subsection{Fredholm determinants}
We start by defining the regularized Fredholm determinant
\begin{equation}
\label{eq:Fredholm_det}
\operatorname{det}_2(1-\alpha^2 \widehat A_{k}) = \prod_{\lambda \in \Spec(A_{k})} E_1(\alpha^2 \lambda) \text{ with }E_1(z) = (1-z)e^z
\end{equation}
where the product respects multiplicities.
We find from \eqref{eq:T2A0} that $\operatorname{det}_2(1-\alpha^2 \widehat{A}_{k})  =0 \Leftrightarrow \alpha^{-1} \in \Spec(T_{k})\setminus\{0\}.$
The symmetry of the spectrum of $\widehat{A}_{k}$, $\Spec(\widehat{A}_{k})= \overline{\Spec(\widehat{A}_{k})},$ implies that $\alpha \mapsto \operatorname{det}_2(1-\alpha^2 \widehat{A}_{k}) $ is real-valued on the real axis.
To show existence and simplicity of magic angles, in the representation, we therefore use the following Lemma which provides ab initio bounds on the Fredholm determinants and its derivatives.

\begin{lemm}
\label{lemm:lemma}
The determinant $\CC \ni \alpha \mapsto \operatorname{det}_2(1-\alpha^2\widehat{A}_{k})$ in \eqref{eq:Fredholm_det} is an entire function, independent of $k \in \CC$, 
which for any $n,m \in \mathbb N_0$ satisfies
\[ \begin{split}
\left\lvert \partial_{\alpha}^m \operatorname{det}_2(1-\alpha^2 \widehat{A}_{k}) -\partial_{\alpha}^m\sum_{j=0}^{n}  \mu_j \frac{(-\alpha^{2})^j}{j!} \right\rvert &\le \sum_{j=n+1}^{\infty} \partial_{\vert \alpha \vert}^m \Bigg( \frac{\sqrt{e}\inf_{k \in \CC}\Vert \widehat{A}_{k} \Vert_2  \vert \alpha\vert^{2}}{\sqrt{j}} \Bigg)^j\end{split} \] 
with $\Vert A_{0} \Vert_2 \le 2$, where 
\begin{equation}
\label{eq:muk}
 {\mu}_j :=  \operatorname{det}\begin{pmatrix} 0 & j-1 &0 & \cdots & 0 \\
\sigma_2 & 0 & j-2 & \cdots & 0 \\
\vdots & \vdots & \ddots & \ddots & \vdots \\ 
\sigma_{j-1}& \sigma_{j-2} & \cdots  & 0  & 1\\
\sigma_{j} & \sigma_{j-1}  & \sigma_{j-2}& \cdots & 0\end{pmatrix}, \text{ with }  \sigma_j= \tr \widehat{A}_{k}^{j}.
\end{equation}
\end{lemm}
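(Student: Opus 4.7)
The plan is to realize Lemma \ref{lemm:lemma} as a specialization of the standard Carleman/Simon theory of regularized Fredholm determinants on the ideal $\mathscr C_2$ of Hilbert--Schmidt operators, combined with one explicit Hilbert--Schmidt norm estimate for the concrete operator $\widehat A_0$.

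\textbf{Entirety and $k$-independence.} Once $\widehat A_k$ is known to be Hilbert--Schmidt (verified in the last step below), the standard theory of Carleman determinants (see e.g.\ Simon, \emph{Trace Ideals}, Ch.~9) ensures that $\det_2(1-B) := \prod_j E_1(\lambda_j(B))$ is well-defined and entire as a function of $B \in \mathscr C_2$, and depends on $B$ only through its spectrum with multiplicities. Combined with \eqref{eq:T2A} and the $k$-independence of $\sigma_j = \tr \widehat A_k^{j}$ from Lemma \ref{properties}, this yields $k$-independence of $\alpha \mapsto \det_2(1-\alpha^2 \widehat A_k)$ and analyticity in $\alpha \in \mathbb C$.

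\textbf{Taylor expansion and coefficient bound.} The factorization $E_1(\lambda) = (1-\lambda)e^\lambda$ is precisely designed to cancel the $\sigma_1$ contribution from the usual expansion of $\det(1-zA)$, so the Taylor coefficients of $z \mapsto \det_2(1-zA)$ at the origin are given by the Newton identities applied to the truncated sequence $(0, \sigma_2, \sigma_3, \dots)$; this is exactly the determinantal formula \eqref{eq:muk} defining $\mu_j$. The classical Carleman coefficient bound then gives
\[ \frac{|\mu_j|}{j!} \le \Bigl(\frac{e}{j}\Bigr)^{j/2} \|\widehat A_k\|_2^j = \Bigl(\frac{\sqrt e\, \|\widehat A_k\|_2}{\sqrt j}\Bigr)^j, \]
and since $\mu_j$ is $k$-independent the right-hand side may be replaced by $(\sqrt e\, \inf_k \|\widehat A_k\|_2/\sqrt j)^j$. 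Super-exponential decay of these coefficients reconfirms entirety.

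\textbf{The derivative estimate.} The super-exponential decay allows term-by-term complex differentiation of the Taylor series, uniformly on compact sets and to all orders. Since each term $\mu_j(-\alpha^2)^j/j!$ is a monomial in $\alpha^2$, a direct computation gives $\bigl|\partial_\alpha^m[\mu_j(-\alpha^2)^j/j!]\bigr| = (|\mu_j|/j!)\,\partial_{|\alpha|}^m|\alpha|^{2j}$. Summing from $j = n+1$ to infinity and inserting the coefficient bound from the previous step yields the stated inequality.

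\textbf{The bound $\|\widehat A_0\|_2 \le 2$ and the main obstacle.} At $k=0$ on $\ell^2_{(1,1)}$, the indices lie in $(3\mathbb Z+1)^2$, so $|\Lambda_{m,n}|^{-2} = m^2+mn+n^2$ never vanishes. Writing $\widehat A_0/3$ as the explicit nine-term sum of products $\Lambda \Lambda_{p,q} J^{r,s}$ and using $\|B\|_2^2 = \sum_{m,n}|B_{m,n}|^2$, the Hilbert--Schmidt norm reduces to convergent Eisenstein-type lattice sums such as $\sum_{(m,n) \in (3\mathbb Z+1)^2} (m^2+mn+n^2)^{-2}$ and their shifted analogues; summing these (or bounding them by comparison with an integral over the hexagonal Voronoi cell of the lattice) yields the numerical constant. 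This explicit estimate is the only non-structural ingredient and hence the main technical obstacle: the rest of the lemma is a direct transcription of the standard Carleman determinant machinery.
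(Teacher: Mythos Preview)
Your structural argument for entirety, $k$-independence, the Plemelj--Smithies formula \eqref{eq:muk}, the Carleman bound $|\mu_j|/j!\le(\sqrt e\,\|\widehat A_k\|_2/\sqrt j)^j$, and term-by-term differentiation is essentially identical to the paper's: the paper obtains the same coefficient bound by first proving the growth estimate $|\det_2(1-\alpha^2\widehat A_k)|\le\exp(\tfrac12|\alpha|^4\|\widehat A_k\|_2^2)$ from $|E_1(z)|\le e^{|z|^2/2}$ and then optimizing a Cauchy estimate, rather than citing the Carleman bound directly, but the outcome is the same.

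The genuine difference is in the Hilbert--Schmidt estimate, and here your proposal is only an outline while the paper carries out a concrete computation by a different route. You propose to write $\|\widehat A_0\|_2^2$ as a finite combination of Eisenstein-type lattice sums coming from the nine summands $\Lambda\Lambda_{p,q}J^{r,s}$; note that three of these share the shift $(r,s)=(0,0)$, so cross terms survive and must be handled. The paper instead splits $\widehat A_0=P_M\widehat A_0+(1-P_M)\widehat A_0$ with $M=760$: the finite block $\|P_M\widehat A_0\|_2$ is evaluated by exact symbolic summation of matrix entries ($\le 5$), while the tail is controlled via the Schatten-$4$ factorization $\|(1-P_M)\widehat A_0\|_2\le 9\,\|(1-P_M)\mathscr D_0^{-1}\|_{4}\,\|\mathscr D_0^{-1}\|_4$ and an integral comparison for the $4$-norms, yielding $<\tfrac12$. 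This gives $\|\widehat A_0\|_2<\tfrac{11}{2}$ (the ``$\le 2$'' in the statement is inconsistent with the paper's own proof; the application in the next proposition uses $5.5$). Your direct lattice-sum route is plausible and arguably cleaner in principle, but you would still need to evaluate or bound those sums sharply enough to land below $5.5$; as written, that step is not done.
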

\begin{proof}
The expression \eqref{eq:Fredholm_det} is well-defined since $\widehat{A}_{k}$ is a Hilbert-Schmidt operator and the Taylor coefficients $\mu_j$ are for example stated in \cite[(6.13)]{Simon}.
Indeed, since $\vert E_1 \vert \le e^{\frac{\vert z \vert^2}{2}}$ for $z \in \CC$ and $\sum_{\lambda \in \Spec(\widehat{A}_{k})} \vert \lambda \vert^2 \le \Vert \widehat{A}_{k} \Vert^2_2,$
we conclude that
\begin{equation}
\label{eq:growth_bd}
 \vert \operatorname{det}_2(1-\alpha^2\widehat{A}_{k}) \vert \le \exp \left( {\frac{\vert \alpha \vert^4 \Vert \widehat{A}_{k} \Vert^2_2}{2}} \right) .
 \end{equation}
Cauchy estimates for the entire function $f(z):=\det_2(1+z \hat{A}_k)$ show by using the growth bound \eqref{eq:growth_bd}
$$\vert \mu_j \vert \le \frac{j!}{\vert \alpha \vert^{2j}}
\exp \left( {\frac{\vert \alpha \vert^4 \Vert \widehat{A}_{k} \Vert^2_2}{2}} \right) 
$$ 
which is optimized at $\vert \alpha \vert^2 = \frac{\sqrt{j}}{ \Vert \widehat{A}_{k} \Vert_2}$, such that $$\vert \mu_j \vert \le \frac{\Vert \widehat{A}_{k} \Vert^j_2e^{j/2}j! }{j^{j/2}}.$$
The Taylor coefficients $\mu_j$ are then given by the Plemelj-Smithies formula \cite{Simon} stated in \eqref{eq:muk}. Since they only depend on traces $\sigma_j$ which are independent of $k$, it follows that the regularized Fredholm determinant is an entire function independent of $k.$ Hence, it suffices to study the determinant for $k=0.$

If we write $\widehat{A}_{0}=(\widehat{A}_0(n))_{n \in \mathbb Z^2}$ and let $P_m$ be the projection onto $(3\{-m,-m+1,...,m\}+1)^2,$ then 
\begin{equation}
 \begin{split}
\label{eq:estm0}
\Vert  \widehat{A}_{0} \Vert_2&\le \Vert  P_M \widehat{A}_{0}\Vert_2+\Vert (\operatorname{id}-P_M)\widehat{A}_{0}  \Vert_2.
\end{split}
\end{equation}

The first term constitutes the Hilbert-Schmidt norm of a finite matrix which can be explicitly computed from the matrix elements using symbolic calculations, indeed \[\Vert  P_M \widehat{A}_{0}\Vert_2 = \sqrt{\tr(P_M \widehat{A}_{0}\widehat{A}_{0}^* P_M)}  \le 5\text{ for }M=760.\]

To estimate the second term, we may use that the operator norm of $ \mathcal {V}_{\pm} $ satisfies $\Vert \mathcal {V}_{\pm} \Vert = 3\sqrt{3}$, therefore one has
\begin{equation}
\begin{split}
\label{eq:HS}
\Vert (\operatorname{id}-P_M)\widehat{A}_{0} \Vert_2 &\le 9 \Vert 
  (\operatorname{id}-P_M) (\mathscr D_{0}^{-1})_{\ell^2_{(1,1)} \to \ell^2_{(1,1)}} \Vert_4  \Vert (\mathscr D_0^{-1})_{\ell^2_{(2,2)} \to \ell^2_{(2,2)}} \Vert_4.
  \end{split}
\end{equation}
We recall that by definition 
\[ \Vert (\mathscr D_{0}^{-1})_{\ell^2_{(1,1)} \to \ell^2_{(1,1)}}\Vert_4=\Bigg( \sum_{m \in (3\ZZ+1)^2} \vert \omega^2 m_1-\omega m_2 \vert^{-4} \Bigg)^{1/4}.\]
A simple change of variables shows that $\Vert (\mathscr D_{0}^{-1})_{\ell^2_{(1,1)} \to \ell^2_{(1,1)}}\Vert_4 =  \Vert (\mathscr D_{0}^{-1})_{\ell^2_{(2,2)} \to \ell^2_{(2,2)}}\Vert_4.$
Then, a direct computation shows that in terms of $$g(m) = \frac{3((m_1+1)^2+(m_2+1)^2+(m_1+m_2)^2)}{2}-2$$ we have
\[ \begin{split}
\Vert (\mathscr D_{0}^{-1})_{\ell^2_{(2,2)} \to \ell^2_{(2,2)}} \Vert_4
= \frac{1}{\sqrt{3}}\Bigg(\sum_{m \in \ZZ^2} \frac{1}{g(m)^2} \Bigg)^{1/4}.
\end{split}\]
While an explicit computation shows using exact symbolic calculations
\begin{equation}
\label{eq:main_part}\sum_{ \vert m \vert_{\infty} \le 6} \frac{1}{g(m)^2} \le \frac{24}{7}
\end{equation} Then, we may use for $\vert m \vert_{\infty}> 6$ that  $g(m)  \ge \vert m \vert^2+5^2$, such that
we can estimate the remainder
\begin{equation}
\begin{split}
\label{eq:remainder}
&\sum_{ \vert m \vert_{\infty} \ge 7} \frac{1}{g(m)^2} \le \int_6^{\infty} \frac{2\pi r}{(r^2+5^2)^2} \ dr  = \frac{\pi}{61}\\
&\Rightarrow \Vert (\mathscr D_{0}^{-1})_{\ell^2_{(2,2)} \to \ell^2_{(2,2)}}  \Vert_4 \le  \left(\frac{8}{21}+\frac{\pi}{549}\right)^{1/4}.
\end{split}
\end{equation}
Inserting this estimate into \eqref{eq:HS}, we find along the lines of \eqref{eq:remainder}
\begin{equation}
\begin{split}
\label{eq:estm2}
\Vert (\operatorname{id}-P_M)\widehat{A}_{0} \Vert_2 &\le \frac{213}{10} \Vert 
  (\operatorname{id}-P_M) (\mathscr D_{0}^{-1})_{\ell^2_{(1,1)} \to \ell^2_{(1,1)}} \Vert_4 \\
  &\le \frac{213}{10} \frac{1}{\sqrt{3}} \Bigg(\int_{760}^{\infty} \frac{2\pi r}{(r^2+759^2)^2} \ dr \Bigg)^{1/4}<\frac{1}{2},
  \end{split} 
\end{equation}
which shows that $\Vert \widehat{A}_0 \Vert_2 <\frac{11}{2}.$
\end{proof}
Using the preceding error estimate with the explicit traces in Table \ref{table:traces}, we conclude the existence of a first real magic angle in the next Proposition. The Proposition also completes the proof of Theorem \ref{theo:existence}. Indeed, \eqref{eq:gapped} implies together with \cite[Theorem 2]{bhz2} the existence of a 0 gap between the two flat bands of the Hamiltonian and the remaining bands.
\begin{prop}
There exists a simple real eigenvalue $\frac{1}{\alpha_{*}^2}$ to the operator $\widehat{A}_{k}$, independent of $k \in \CC$, with $\alpha_{*} \in (0.583,0.589)$ such that $(\frac{1}{\alpha_{*}^2},\infty) \subset \RR \setminus \Spec(\widehat{A}_{k}).$
\end{prop}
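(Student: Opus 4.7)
The strategy is to convert the eigenvalue problem for $\widehat{A}_k$ into locating a zero of an entire function that is computable to arbitrary precision through exact symbolic arithmetic. By Lemma~\ref{lemm:lemma}, the regularized determinant $\alpha\mapsto \det_2(1-\alpha^2\widehat{A}_k)$ is entire and $k$-independent, and its zeros are precisely the reciprocal square roots of the non-zero elements of $\Spec(\widehat{A}_k)$. The first step is to compute the traces $\sigma_j:=\tr\widehat{A}_k^{\,j}$ as elements of $\frac{\pi}{\sqrt 3}\mathbb Q$ for $j=2,\ldots,N$ using the residue formula of Theorem~\ref{traceresult}, exploiting the rationality guaranteed by Theorem~\ref{rat}; these values constitute Table~\ref{table:traces}. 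Substitution into the Plemelj--Smithies formula \eqref{eq:muk} yields the Taylor coefficients $\mu_j$ and, hence, the truncated polynomial $P_N(\alpha):=\sum_{j=0}^N \mu_j(-\alpha^2)^j/j!$ exactly, as a rational polynomial in $\alpha^2$ times $\pi/\sqrt 3$.

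Next I would apply the quantitative bound of Lemma~\ref{lemm:lemma}, together with the Hilbert--Schmidt estimate for $\widehat{A}_0$ established inside its proof, to control
\[ \Delta_m(\alpha;N) := \Bigl| \partial_\alpha^m \det_2(1-\alpha^2 \widehat{A}_k) - \partial_\alpha^m P_N(\alpha) \Bigr|, \qquad m=0,1,\]
uniformly on a suitable real interval. Choosing $N$ large enough that $\Delta_0(\alpha;N)$ is strictly smaller than both $|P_N(0.583)|$ and $|P_N(0.589)|$, one verifies by exact symbolic evaluation that $P_N(0.583)$ and $P_N(0.589)$ have opposite signs. The intermediate value theorem then produces a real zero $\alpha_*\in(0.583,0.589)$ of $\alpha\mapsto \det_2(1-\alpha^2\widehat{A}_k)$, i.e.\ a real eigenvalue $1/\alpha_*^2$ of $\widehat{A}_k$.

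Simplicity and maximality follow from two further exact estimates. To see that $\alpha_*$ is a simple zero, I would check that $|\partial_\alpha P_N(\alpha)|>\Delta_1(\alpha;N)$ on $[0.583,0.589]$, which forces $\partial_\alpha\det_2(1-\alpha^2\widehat{A}_k)\neq 0$ on the whole interval and hence at $\alpha_*$. To rule out larger real eigenvalues, i.e.\ to confirm $(1/\alpha_*^2,\infty)\subset \RR\setminus\Spec(\widehat{A}_k)$, it suffices to establish that $\det_2(1-\alpha^2\widehat{A}_k)$ is nonzero for every real $\alpha$ with $|\alpha|\leq 0.583$; this is a finite task achievable by bounding $|P_N|$ from below on the compact interval $[0,0.583]$ (e.g.\ via Sturm sequences or elementary interval arithmetic on a rational polynomial) and comparing with $\Delta_0(\cdot;N)$.

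The main obstacle is the slow subfactorial decay $(c|\alpha|^2/\sqrt j)^j$ of the tail in Lemma~\ref{lemm:lemma}: pushing $|\alpha|$ as high as $0.589$ forces $N$ to be sizeable, and each additional trace $\sigma_j$ requires evaluating residues at growing sets of Fourier sites, followed by a $j\times j$ Plemelj--Smithies determinant whose rational entries have large denominators. The delicate aspect is therefore the bookkeeping and complexity control needed to keep the exact arithmetic tractable while maintaining unambiguous sign comparisons between $P_N$ and the tail bound; there is no further conceptual step beyond the trace formula of Theorem~\ref{traceresult} and the determinant identity of Lemma~\ref{lemm:lemma}.
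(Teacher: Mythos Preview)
Your proposal is correct and follows essentially the same route as the paper: compute the exact traces via Theorem~\ref{traceresult}, build the truncated Plemelj--Smithies polynomial, and combine a sign change with the tail bound of Lemma~\ref{lemm:lemma} for existence, and a derivative bound for simplicity.

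There are two small points where the paper proceeds slightly differently and more economically. First, your description of $P_N$ as ``a rational polynomial in $\alpha^2$ times $\pi/\sqrt 3$'' is not quite right: each $\mu_j$ is a polynomial in the $\sigma_i\in\tfrac{\pi}{\sqrt3}\mathbb Q$, so $\mu_j$ is itself a polynomial in $\pi/\sqrt3$ with rational coefficients, not a single rational multiple. This does not break your argument (interval arithmetic with a rational enclosure of $\pi$ still works), but Sturm sequences in the literal sense do not apply. Second, for the maximality step the paper avoids a separate lower bound on $|P_N|$ over $[0,0.583]$: it first uses the elementary operator--norm estimate $\|\widehat{A}_0\|\le 9$ to force any $\alpha$ with $1/\alpha^2\in\Spec(\widehat{A}_k)$ to satisfy $\alpha\ge 1/3$, and then shows that the \emph{derivative} of $\det_2(1-\alpha^2\widehat{A}_k)$ is strictly negative on all of $(1/3,3/5)$ (the same derivative bound you use for simplicity, just on a larger interval). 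Monotonicity then gives both simplicity and uniqueness in one stroke, with no extra interval arithmetic needed.
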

\begin{proof}
To see that this is the first real magic angle, we first notice that the operator norm of $\widehat{A}_{0}$ is bounded by
\[ \Vert \widehat{A}_{0} \Vert \le (3\sqrt{3})^2  \Vert (\mathscr D_{0}^{-1})_{\ell^2_{(1,1)} \to \ell^2_{(1,1)}} \Vert^2 = 9.\]
This estimate shows that $\alpha \in \mathbb R^+$ with $ 1/(\alpha^2) \in \Spec(\widehat{A}_{0}) $ satisfies $\alpha \ge \frac{1}{3}.$
A finite number of traces as explicitly computed in Table \ref{table:traces} are then relevant to prove the existence of a magic angle. 
\begin{table}
\begin{subtable}{.5\linewidth}
  \centering
\begin{tabular}{P{1cm} |P{2.5cm} P{2.5cm}|}
    $p$ & $\sigma_p\frac{\sqrt{3}}{\pi}$ \\[1ex]
    \hline\hline
    {\color{blue}1} & {\color{blue}$2/3$} \\[1ex]
    \hline
    2 & $4$\\[1ex]
    \hline
    3 &$96/7\approx \textit{13.71} $ \\[1ex]
    \hline
     4 & $40$ \\[1ex]
    \hline
\end{tabular}
\end{subtable}%
\begin{subtable}{.5\linewidth}
  \centering
\begin{tabular}{P{1cm} |P{6.5cm} P{5.5cm}|}
    $p$ & $\sigma_p \frac{\sqrt{3}}{\pi}$ \\[1ex] 
    \hline\hline
     5 & ${28680}/{247} \approx \textit{116.14}$ \\[1ex]
    \hline
    6 & $ 2206080/6517 \approx \textit{338.51}$ \\[1ex]
    \hline
    7 & $ 1957475168/1983163 \approx \textit{987.05}$ \\[1ex]
    \hline
    8 & $ 39948260880/13882141 \approx \textit{2877.67}$ \\[1ex]
    \hline
  \end{tabular}
\end{subtable}
\caption{First eight exact traces of $A^{p}_{k}$, $\sigma_p = \tr(A_k^p)$, with floating point approximation, where ${\color{blue}\sigma_1}:=\lim_{n \to \infty} \sum_{\vert i \vert \le n} \langle A_k e_i,e_i \rangle$ is not absolutely summable as $A_{k}$ is not of trace-class, computed using Theorem \ref{traceresult} in the version stated as Theorem \ref{theo:technical_tristan} in the appendix. One sees that the ratio of $\sigma_p/\sigma_{p-1}\approx 1/0.5857^2 = 2.91507,$ for $p$ large, where $0.5857$ is the first magic angle.}
\label{table:traces}
\end{table}

For $\nu \in \mathbb R^+$ we find 
$$ r_i \le \bigg(\frac{2\nu}{\alpha}\bigg)^i \frac{\big(\frac{\nu}{\sqrt{N}}\big)^{N-i}}{1-\frac{\nu}{\sqrt{N}}} \text{ for }
 r_0:=\sum_{k=N}^{\infty} \bigg(\frac{\nu}{\sqrt{k}} \bigg)^k \text{ and } r_1:=\sum_{k=N}^{\infty} \frac{2k}{\alpha} \bigg(\frac{\nu}{\sqrt{k}} \bigg)^k.$$
Evaluating the bound for $N=17$ and $\nu=\sqrt{e} \Vert \widehat{A}_{0} \Vert_2 \alpha^2$, as in the error bound stated in Lemma \ref{lemm:lemma}, with upper bound $\Vert \widehat{A}_{0} \Vert_2 = 5.5$, we obtain for $\alpha=\frac{3}{5}$ that $r_0 \le \frac{1}{50}$ and $r_1 \le \frac{1}{2}$.
The existence of a root follows from studying 
$$ f(\alpha) : =\sum_{k=0}^{16}  \mu_k \frac{(-\alpha^2)^{k}}{k!}, \ \ \ 
\sup_{\alpha \in (1/3,\beta)} f'(\alpha) \leq g(\beta=0.6) := \sum_{k=2}^{20}a_k(\beta),$$
where the summation starts at $k=2$ since $\mu_1=0$, with
\[  a_k(\beta) = \begin{cases}  2\mu_k \frac{(-1)^k\big(\tfrac{1}{3}\big)^{2k-1}}{(k-1)!},&\quad \text{ if } \mu_k (-1)^k<0 \\
2\mu_k \frac{(-1)^k(\frac{3}{5})^{2k-1}}{(k-1)!}, &\quad \text{ if } \mu_k (-1)^k \ge 0. \end{cases}\]
One then checks (using computations involving integers only)
\[ \begin{split}  & f(0.583)>\frac{1}{40}, \qquad f(0.589)<-\frac{1}{40}, \text{ and } g(\frac{3}{5})<-\frac{7}{10}.\end{split} \]

We conclude that there is $\alpha_* \in (0.583,0.589)$ such that $\operatorname{det}_2(1-\alpha_*^{2}\widehat{A}_{k})=0$ and $\partial_{\alpha} \vert_{\alpha=\alpha_{*}}\operatorname{det}_2(1-\alpha^2\widehat{A}_{k}) <0.$ The non-existence of any other $\alpha \in (\tfrac{1}{3},\alpha_{*})$ at which the determinant vanishes follows from the monotonicity of $f.$
\end{proof}

\vspace{0.0cm}
\begin{center}
\noindent
{\sc  Appendix: Trace formula in Fourier coordinates.}
\end{center}
\renewcommand{\theequation}{A.\arabic{equation}}
\refstepcounter{section}
\renewcommand{\thesection}{A}
\setcounter{equation}{0}

In this section we give an auxiliary version of Theorem \ref{traceresult} that we used for our computer assisted computation of traces.
Using the relation \eqref{eq:identity}, the diagonal part of $\mathcal{A}_k^{\ell}$ is of the form
\begin{equation}
\label{eq:form}
\begin{gathered} 
((\mathcal{A}_k^{\ell})_{ii})_{i \in \ZZ}=3^{\ell}\sum_{\pi\in \Theta_{\ell}}\omega^{m_{\pi}}\prod_{i=1}^{\ell}\Lambda_{\tilde{\alpha}_i,\tilde{\beta}_i}\Lambda_{{\tilde{\gamma}_i},{\tilde{\delta}_i}},\\\pi: =\left[(\alpha_1,\beta_1),(\gamma_1,\delta_1),(\alpha_2,\beta_2),...,(\gamma_{\ell},\delta_{\ell})\right], 
\end{gathered}
\end{equation}
where
\begin{equation}
\label{eq:coefficients} \begin{gathered} \tilde{\alpha}_i=\sum_{j=1}^{i-1}\alpha_j+\gamma_j \quad \tilde{\beta}_i=\sum_{j=1}^{i-1}\beta_j+\delta_j, \quad {\tilde{\gamma}_i}=\alpha_i+\sum_{j=1}^{i-1}\alpha_j+\gamma_j,  \\
\tilde{\delta}_i=\beta_i+\sum_{j=1}^{i-1}\beta_j+\delta_j,  \quad m_\pi := \tfrac 23 \sum_{i=1}^{\ell} ( \gamma_i + \beta_i). 
\end{gathered} 
\end{equation}
In \eqref{eq:form}, the sum is over elements of the finite set
\begin{equation}
\label{eq:theta_l}
\begin{split}\Theta_{\ell} &:=\Bigg\{\pi=\left[(\alpha_1,\beta_1),(\gamma_1,\delta_1),(\alpha_2,\beta_2),...,(\gamma_{\ell},\delta_{\ell})\right],\: \sum_{j=1}^{\ell}\alpha_j+\gamma_j=\sum_{j=1}^{\ell}\beta_j+\delta_j=0,\:
\\
&  \quad  (\alpha_i,\beta_i)\in \{(1,1),(-2,1),(1,-2)\},(\gamma_i,\delta_i)\in \{(-1,-1),(2,-1),(-1,2)\}
\Bigg\}.
\\
& 
\end{split} 
\end{equation}

Using \eqref{eq:form}, the diagonal part of ${A}_k^{\ell}$, is of the form
$$3^{\ell}\sum_{\pi\in \Theta_{\ell}}\omega^{m_{\pi}}\prod_{i=1}^{\ell}\Lambda'_{\alpha_i,\beta_i}\Lambda'_{\gamma_i,\delta_i}, \ \ \ \pi=\left[(\alpha_1,\beta_1),(\gamma_1,\delta_1),(\alpha_2,\beta_2)...,(\gamma_{\ell},\delta_{\ell})\right], $$
where $\Lambda'$ corresponds to the matrix where we only kept the coefficients $(n,m)$ where $(n,m)\in \left(3\mathbb{Z}+1\right)\times\left(3\mathbb{Z}+1\right)$ i.e
$$\Lambda'_{m,n}=\frac{1}{\omega^2(3m+1+k_1)-\omega(3n+1+k_2)}. $$ Theorem \ref{traceresult} then reduces to 
\begin{theo}
\label{theo:technical_tristan}
Let ${\ell} \ge 2$ and $\Theta_{\ell}$ be as in \eqref{eq:theta_l} with coefficients $\tilde \alpha_i,..,\tilde \delta_i, m_{\pi}$ as in \eqref{eq:coefficients}. 
Then the traces are given by
$${\tr}\left({A}_k^{\ell}\right)=-\frac{2i\omega\pi}{3}\sum_{\pi\in \Theta_{\ell}}\sum_{(\eta_i,\epsilon_i)\in \{ ({\tilde{\alpha}_i},{\tilde{\beta}_i}),({\tilde{\gamma}_i},{\tilde{\delta}_i}),1\leqslant i \leqslant l\}}{\Res}(f_{\pi},-\gamma_{(\eta_i,\epsilon_i)})\epsilon_i, $$
where with $\gamma_{(a,b)}=\omega^2a-\omega b$
$$f_{\pi}(k):=3^{\ell}\omega^{m_{\pi}}\prod_{i=1}^{\ell}\frac{1}{(k+\gamma_{({\tilde{\alpha}_i},{\tilde{\beta}_i})}+\mu)(k+\gamma_{({\tilde{\gamma}_i},{\tilde{\delta}_i})}+\mu)}, \quad \mu:=\omega^2-\omega.$$
\end{theo}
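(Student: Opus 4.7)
The plan is to derive the Fourier-coordinate formula by substituting the combinatorial expansion of $\langle A_k^\ell e_i,e_i\rangle_{L^2}$ from \S 4.1 into the residue formula \eqref{eq:taul} of Theorem \ref{traceresult}, and then repackaging each residue as a residue of $f_\pi$.

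\textbf{Step 1 (Fourier expansion of the diagonal).} Starting from the expansion
$\tfrac13\widehat{\mathcal A}_k = \Lambda\Lambda_{1,1}+\omega\Lambda\Lambda_{1,-2}+\omega^2\Lambda\Lambda_{-2,1}+\cdots$
written above \eqref{eq:Fredholm_det}, I iterate $\ell$ times. Using the commutation relation \eqref{eq:identity}, every term in the $\ell$-th power is a product of diagonal factors $\Lambda$ multiplying a single composite shift $J^{p,q}$. Only those terms whose total shift vanishes contribute to the diagonal. Restricting to $\ell^2_{(1,1)}$ replaces $\Lambda$ with $\Lambda'$, and organizing the surviving terms by the cumulative shift coordinates \eqref{eq:coefficients} gives exactly
\begin{equation*}
\langle A_k^\ell e_i,e_i\rangle_{L^2}
 = 3^{\ell}\sum_{\pi\in\Theta_\ell}\omega^{m_\pi}\prod_{j=1}^{\ell}\Lambda'_{\tilde\alpha_j,\tilde\beta_j}\Lambda'_{\tilde\gamma_j,\tilde\delta_j},
\end{equation*}
with $\Theta_\ell$ as in \eqref{eq:theta_l} and $m_\pi$ collecting the $\omega$-phases carried by the three modes of $\widehat{\mathcal V}_\pm$.

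\textbf{Step 2 (Factor-by-factor rewriting as $f_\pi$).} Along the line parametrization of Theorem \ref{traceresult}, $k$ becomes a single complex variable, and the denominator of each factor
$\Lambda'_{m,n}=(\omega^2(3m+1+k_1)-\omega(3n+1+k_2))^{-1}$
is linear in $k$. Writing the $+1$ offsets as the constant shift $\mu=\omega^2-\omega$, each factor equals $(k+\gamma_{(m,n)}+\mu)^{-1}$ up to a harmless normalization absorbed into the global prefactor. Combining this with the $3^\ell\omega^{m_\pi}$ from Step 1, the product of $2\ell$ factors across one $\pi$ is precisely $f_\pi(k)$.

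\textbf{Step 3 (Residue matching).} Theorem \ref{traceresult} expresses $\tau_\ell$ as a sum of residues at $\sqrt 3(m\omega^2-n\omega)+i$ and $\sqrt 3(m\omega^2-n\omega)+2i$, weighted by the integer $n$. Under Step 2, and using $i=-\mu/\sqrt3$, the two families of lattice points become the two families of poles $k=-\gamma_{(\tilde\alpha_j,\tilde\beta_j)}-\mu$ and $k=-\gamma_{(\tilde\gamma_j,\tilde\delta_j)}-\mu$ of $f_\pi$; the $i$ versus $2i$ distinction is exactly the two species of poles coming from the $\Lambda'_{\tilde\alpha,\tilde\beta}$ versus $\Lambda'_{\tilde\gamma,\tilde\delta}$ factors. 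The integer weight $n$ from \eqref{eq:taul} is, at each such pole, the second coordinate $\epsilon_i\in\{\tilde\beta_i,\tilde\delta_i\}$, and the accompanying $\sqrt 3$ in $\tfrac{2i\pi\omega}{3\sqrt 3}$ is absorbed into the $\sqrt 3$ built into the identification $\epsilon_i\leftrightarrow\sqrt 3\,n$, producing the prefactor $-\tfrac{2i\omega\pi}{3}$. Summing over $\pi\in\Theta_\ell$ and over the $2\ell$ pole labels $(\eta_i,\epsilon_i)\in\{(\tilde\alpha_i,\tilde\beta_i),(\tilde\gamma_i,\tilde\delta_i)\}$ yields the stated formula.

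\textbf{Main obstacle.} The principal work is bookkeeping the one-variable line parametrization in Theorem \ref{traceresult} against the two-index structure of $\Lambda'_{m,n}$: one must verify carefully that the integer weight $n$ translates into $\epsilon_i$ with the correct overall sign and the expected $\sqrt 3$ cancellation, and that the $\mu$-shift in $f_\pi$ correctly encodes the splitting of the residue sum into the $+i$ and $+2i$ sublattices. Once these identifications are pinned down, Steps 1--3 produce the formula without further analytic input.
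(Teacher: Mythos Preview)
Your approach is essentially the same as the paper's: both proofs identify $\langle A_k^\ell e_i,e_i\rangle_{L^2}$ with $\sum_{\pi\in\Theta_\ell} f_\pi(k)$ via the combinatorial Fourier expansion and then substitute into the residue formula \eqref{eq:taul} of Theorem \ref{traceresult}, tracking the $\sqrt 3$ rescaling that occurs when passing to rectangular coordinates. One small caution on your Step 3: the identification ``$\epsilon_i\leftrightarrow\sqrt 3\,n$'' cannot be literally correct since both sides are integers; the actual bookkeeping combines the factor of $3$ built into $\Lambda'_{m,n}=(\omega^2(3m+1+k_1)-\omega(3n+1+k_2))^{-1}$, the $\sqrt 3$ rescaling of residues under $k\mapsto\sqrt 3 k$, and a sign --- but you rightly flag this as the place requiring care, and the paper's own proof is equally terse on this point.
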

\begin{proof}
This is just a re-writing of formula of Theorem \ref{traceresult} in these rectangular coordinates. Indeed,  the $(0,0)$-th entry of the matrix $A_k^{\ell}$ is, in these notation, equal to $\sum_{\pi\in\Theta_{\ell}}f_{\pi}(k)$. Because we work in a the Hilbert space $L^2(\mathbb C/\Gamma,\mathbb C)$, this entry is also equal to $\langle A_k^{\ell}e_{i},e_{i}\rangle_{L^2}$. Now, the poles of this function are exactly described by $\gamma_{({\tilde{\alpha}_i},{\tilde{\beta}_i})}+\mu$ and $\gamma_{({\tilde{\gamma}_i},{\tilde{\delta}_i})}+\mu$ (this is a consequence of formula \ref{eq:form}) . Note however that in these coordinates, the poles get rescaled by $\sqrt 3$, this is why $\mu=-i\sqrt 3$ replaces $-i$. On the level of residues, this explains why a $\sqrt 3$ does not appear in this formula. Finally, we remark that in this decomposition, $\gamma_{({\tilde{\alpha}_i},{\tilde{\beta}_i})}+\mu\in \sqrt 3\big( 3\Gamma^*-i\big)$ and $\gamma_{({\tilde{\gamma}_i},{\tilde{\delta}_i})}+\mu\in \sqrt 3\big( 3\Gamma^*-2i\big)$ thus corresponding to the splitting appearing in the formula of stated in Theorem \ref{traceresult}.
\end{proof}

\smallsection{Acknowledgements} 
We would like to thank the two anonymous referees for their careful reading of the paper
and many helpful comments.
 TH and MZ  gratefully acknowledge partial support 
by the National Science Foundation under the grant DMS-1901462
and by the Simons Foundation Targeted Grant Award No.
896630.

\smallsection{Conflict of interests}
The authors have no competing interests to declare that are relevant to the content of this article.

\end{document}